\theoremstyle{plain} \newtheorem{lemma}{Lemma}
\theoremstyle{plain} \newtheorem{proposition}{Proposition}
\theoremstyle{plain} 
\theoremstyle{definition} \newtheorem{corollary}{Corollary}
\theoremstyle{definition} \newtheorem{definition}{Definition}
\theoremstyle{definition} \newtheorem{assumption}{Assumption}
\theoremstyle{definition} \newtheorem{example}{Example}
\theoremstyle{plain} \newtheorem*{lemma*}{Lemma}
\theoremstyle{plain} \newtheorem*{proposition*}{Proposition}
\theoremstyle{plain} \newtheorem*{theorem*}{Theorem}
\theoremstyle{definition} \newtheorem*{corollary*}{Corollary}
\theoremstyle{definition} \newtheorem*{definition*}{Definition}
\theoremstyle{definition} \newtheorem*{assumption*}{Assumption}
\theoremstyle{definition} \newtheorem*{example*}{Example}
\newcommand{\R}{\ensuremath{\mathbb{R}}}										
\newcommand{\E}[1]{\mathbb{E}\left[#1\right]}								
\newcommand{\Ebig}[1]{\mathbb{E}\big[#1\big]}
\newcommand{\CE}[2]{\mathbb{E}\left[#1 \left\vert\right. #2 \right]}	
\newcommand{\Prob}[1]{\mathbb{P}\left[#1\right]}						
\newcommand{\indicator}[1]{\textbf{1}_{\left\{#1\right\}}} 	
\newcommand{\pd}[2]{\frac{\partial #1}{\partial #2}}				
\newcommand{\mb}[1]{\ensuremath{\mathbf{#1}}}
\newcommand{\mc}[1]{\ensuremath{\mathcal{#1}}}							
\newcommand{\wh}[1]{\widehat{#1}}														
\newcommand{\wt}[1]{\widetilde{#1}}													
\newcommand{\beq}[1]{\begin{equation} \label{eq:#1}}
\newcommand{\eeq}{\end{equation}}
\newcommand{\beqn}{\begin{equation*}}
\newcommand{\eeqn}{\end{equation*}}
\newcommand*\samethanks[1][\value{footnote}]{\footnotemark[#1]} 
\newcommand{\eop}{\hfill \qedsymbol}				
\newcommand{\f}[1]{h_{X_1|Y_1 = y_1}(#1)}
\newcommand{\fh}[1]{h_{X_1|Y_1 = \hat{y}_1}(#1)}
\renewcommand{\cite}[1]{\citet{#1}}
\title{\bf On Bidding with Securities: Risk Aversion and Positive Dependence\footnotetext{This work was supported by the National Science Foundation under Grant NSF ECCS 10-28464.}}
\author{
Vineet Abhishek\thanks{Department of Electrical and Computer Engineering, University of Illinois, 1308 W. Main St., Urbana, IL 61801. \newline Contact: {\tt \small abhishe1@illinois.edu} and {\tt \small b-hajek@illinois.edu}},
~Bruce Hajek\samethanks[1]
~and Steven R. Williams\thanks{(Corresponding author) Department of Economics, University of Illinois, 1407 W. Gregory Dr., Urbana, IL 61801.
\newline Contact: {\tt \small swillia3@illinois.edu}, Phone: {\tt \small +1-217-333-4516}}
}
\begin{document}
\date{}
\maketitle

\begin{abstract}
\cite{DeMarzo2005} consider auctions in which bids are selected from a completely ordered family of securities whose values are tied to the resource being auctioned.  The paper defines a notion of relative steepness of families of securities and shows that a steeper family provides greater expected revenue to the seller. Two assumptions are: the buyers are risk-neutral; the random variables through which values and signals of the buyers are realized are affiliated. We show that this revenue ranking holds for the second price auction in the case of risk-aversion. However, it does not hold if affiliation is relaxed to a less restrictive form of positive dependence, namely first order stochastic dominance (FOSD). We define the relative strong steepness of families of securities and show that it provides a necessary and sufficient condition for comparing two families in the FOSD case. All results extend to the English auction.

\vspace{10pt}
\noindent \emph{JEL classification:} D44; D82; G00
\end{abstract}

\section{Introduction} \label{sec:introduction}
Consider auctioning an asset that is a resource to be developed for profit by the winning buyer. It is common in such auctions to require bids in the form of securities whose values to the seller are tied to the eventual realized value of the asset. As an alternative to simply soliciting cash bids for the asset, for instance, a seller may require buyers to compete in terms of the equity share that the seller retains of the asset's profits. Other common securities used in bidding include debt and call options. \cite{DeMarzo2005} develop a general theory of bidding with securities in the first price and the second price auctions. Bids are selected from a completely ordered family of securities and the paper focuses on the importance of the choice of the family of securities to the seller's expected revenue. The paper defines a partial ordering of families based on the notion of \textit{steepness} (to be made precise in Section~\ref{sec:risk-aversion}) and shows that the steeper family of securities provides higher expected revenue to the seller. Two assumptions are made to prove this result: (i) buyers are \textit{risk-neutral}; (ii) the random variables through which values and signals of the buyers are realized are \textit{affiliated}. Risk neutrality is a severe restriction for a financial model. Affiliation is an extremely restrictive form of positive dependence.\footnote{\cite{deCastro2010} shows that the set of affiliated probability density functions for two random variables is the complement of an open and dense set in the space of continuous probability density functions under an appropriate topology and has zero measure under an appropriate measure.}

Our objective in this paper is to explore in the case of the second price auction the dependence of the revenue ranking of families of securities upon these two assumptions.\footnote{In addition to the second price auction, \cite{DeMarzo2005} also rank families of securities in the case of the first price auction. An additional restriction on the set of securities and the dependence of values and signals beyond affiliation is required in this analysis (i.e., the \textit{log-supermodularity} of each buyer's expected profit, which is Assumption C in the paper). Our interest in this paper is in exploring the effect of relaxing the assumption of affiliation and not restricting it further. We have not been able to carry out the analysis for the first price auction at this level of generality. We do, however, discuss the extension of our results in Section \ref{sec:eng} to the commonly used English auction, which is not considered in \cite{DeMarzo2005}.} We work with a symmetric interdependent values model on the lines of \cite{Milgrom&Weber82} and risk averse buyers. We consider two additional forms of positive dependence, namely, a \textit{monotone likelihood ratio} (\textit{MLR}) property, which is weaker than affiliation;\footnote{\cite{DeMarzo2005} assume the MLR property for the case of independent private values and affiliation for the case of interdependent values. For independent private values, the MLR property and affiliation are equivalent.} and a \textit{first order stochastic dominance} (\textit{FOSD}) property, which is weaker than the MLR property. FOSD captures the idea that the observation by a bidder of a higher signal makes larger values of the other variables more likely. The additional restriction to either MLR or affiliation is attributable to their mathematical value and is typically not motivated in any practical sense. Each of these three positive dependence assumptions has been extensively used in both auction theory and information economics.

Our main results are the following:

\begin{enumerate}[(i)]
\item 
A steeper family of securities provides higher expected revenue to the seller even with risk averse buyers and assuming that the values are positively dependent on signals in the MLR sense. We in this sense extend the result of \cite{DeMarzo2005} to the case of risk aversion and a richer informational environment.

\item
We show with an example that if the notion of positive dependence among values and signals of buyers is relaxed further from MLR to FOSD, then even for risk neutral buyers the revenue ranking of families of securities of \cite{DeMarzo2005} no longer holds.

\item
We strengthen steepness to a property that we call \textit{strong steepness} in order to rank families of securities in the case of FOSD and either risk neutral or risk averse buyers. Relative strong steepness is shown to be both necessary and sufficient for comparing two families of securities in this case: one family generates a higher expected revenue for the seller than a second family for all instances of our model satisfying FOSD if and only if it is strongly steeper than the second. 

\item
Finally, we show that the above results extend to the case of the English auction.
\end{enumerate}

\noindent It is worth emphasizing that \cite{DeMarzo2005} establish only sufficiency of relative steepness as a condition to rank two families of securities according to the revenue realized by them if affiliation is the notion of positive dependence. By contrast, we show that relative strong steepness is both necessary and sufficient for ranking two families of securities according to the expected revenue realized by them if FOSD is the notion of positive dependence. Furthermore, our proofs are more straightforward than those in\cite{DeMarzo2005} and do not require its strong regularity assumption on the probability density of return conditioned on a buyer's signal. We accomplish this mainly by exploiting the properties of concave functions, which in particular is what allows the consideration of risk averse buyers in our analysis. 

Our paper complements recent work concerning the impact of security choice on the seller's expected profit from auctions. \cite{Che&Kim2010}, \cite{Kogan&Morgan2010}, and \cite{Jun&Wolfstetter2012} study how the choice of security affects the incentives of the winning bidder in choosing either a level of investment or effort that in turn affects the expected return from the asset. The first case concerns adverse selection while the second concerns moral hazard among bidders. In each case, the ranking of securities based on the seller's net expected profit does not agree with the ranking according to relative steepness in the sense of \cite{DeMarzo2005}. None of these three papers, however, explore the effect of risk aversion or the role of the positive dependence assumption in their assessment of security bids. 

This paper is organized as follows. Section~\ref{sec:model} outlines our model, notation, and definitions. Section \ref{sec:risk-aversion} extends the revenue ranking of families of securities of \cite{DeMarzo2005} to risk averse buyers. Section~\ref{sec:pos-dep} shows that this ranking is not preserved under a more general form of positive dependence, i.e., FOSD. The revenue ranking of families of securities based on strong steepness is then presented. Section \ref{sec:eng} provides a brief overview of how the results of Sections \ref{sec:risk-aversion} and \ref{sec:pos-dep} extend to the case of the English auction. We conclude in Section \ref{sec:conclusions}.

\section{Model, Notation, and Assumptions} \label{sec:model}
Consider $N$ buyers competing for a resource that a seller wants to sell. Each buyer has a value for the resource that is unknown to him; however, each buyer has some information (\textit{signal}) about the value of the resource. The signal of a buyer is known only to him, but it may be informative to other buyers in the sense that it may improve their respective estimates of the value of the resource.

We model this by assuming that the value of the resource to a buyer $n$, denoted by $x_n$, is a realization of a nonnegative random variable $X_n$, unknown to him. This is the profit to buyer~$n$ from developing the resource in the absence of any payments to the seller but after taking into account the variable costs. A buyer $n$ privately observes a signal $y_n$ through a realization of a random variable $Y_n$ that is correlated with $(X_1, X_2, \ldots, X_N)$. A winning buyer needs to invest a fixed amount $I  \in \R$, which is the same for each buyer, to develop the resource. We allow for negative values of $I$; a negative value represents a subsidy by a third party that goes to the winner to help develop the resource. As in \cite{DeMarzo2005}, we assume that the realization of $X_n$ is observed ex-post by the seller and buyer $n$ if buyer~$n$ wins and subsequently uses the resource. The joint cumulative distribution function (CDF) of the random variables $X_n$'s and $Y_n$'s is common knowledge.  

Let $\mb{x} \triangleq (x_1, x_2, \ldots, x_N)$ denote a vector of values and let the random vector $(X_1, X_2, \ldots, X_N)$ be denoted by $\mb{X}$. A vector of signals $\mb{y}$ and the random vector $\mb{Y}$ are defined similarly.  We use the standard game theoretic notation of $\mb{x}_{-n} \triangleq (x_1, \ldots, x_{n-1}, x_{n+1}, \ldots, x_N)$, and similarly for $\mb{X}_{-n}$, $\mb{y}_{-n}$, and $\mb{Y}_{-n}$. 

Let $F_{\mb{X},\mb{Y}}(\mb{x},\mb{y})$ denote the joint CDF of $(\mb{X}, \mb{Y})$. It is assumed to have the following symmetry property:

\begin{assumption} \label{assumption:symmetry}
The joint CDF of $(X_n, Y_n, \mb{Y}_{-n})$, denoted by $F_{X_n, Y_n, \mb{Y}_{-n}}(x_n, y_n, \mb{y}_{-n})$, is identical for each $n$ and is symmetric in its last $N-1$ arguments (i.e., in the coordinates of $\mb{y}_{-n}$).
\end{assumption}

\noindent Assumption \ref{assumption:symmetry} allows for a special dependence between the value of the resource to a buyer and his own signal, while the identities of other buyers are irrelevant to him. The model reduces to the independent private values model if $(X_n, Y_n)$ is independent of $(\mb{X}_{-n},\mb{Y}_{-n})$ for all~$n$,  to the pure common value model if $X_1 = X_2 = \ldots = X_N$, and includes a continuum of interdependent value models between these two extremes. Because of Assumption~\ref{assumption:symmetry}, the subsequent assumptions and analysis are given from buyer~$1$'s viewpoint.

The set of possible values that each $X_n$ can take is assumed to be an interval $[\underline{x},\overline{x}]$ and the set of possible values that each $Y_n$ can take is assumed to be an interval $[\underline{y},\overline{y}]$. Assume that the joint probability density function (pdf) of the random vector $\mb{Y}$, denoted by $f_{\mb{Y}}(\mb{y})$, exists and is positive for all $\mb{y} \in [\underline{y},\overline{y}]^N$. By Assumption \ref{assumption:symmetry}, $f_{\mb{Y}}(\mb{y})$ is symmetric in its $N$ arguments. Define the random variable $Z_1$ as the largest among $Y_2, Y_3, \ldots, Y_N$, i.e., $Z_1 \triangleq \max\{Y_2, Y_3, \ldots, Y_N\}$; denote a realization of $Z_1$ by $z_1$.

It is commonly assumed in auction theory that the observation of a larger signal corresponds to more favorable estimates of the value of the resource. This is captured by first order stochastic dominance. The specific property that we need in our analysis of the second price auction is as follows:

\begin{definition}[FOSD] \label{defn:pd-fosd}
The random variable $X_1$ is \textit{positively dependent on the random variables $(Y_1,Z_1)$ in the first order stochastic dominance (FOSD)} sense if for any $x_1$, $1 - F_{X_1|Y_1 = y_1, Z_1 = z_1}(x_1)$ is nondecreasing in $y_1$ and $z_1$, where $F_{X_1|Y_1 = y_1, Z_1 = z_1}(x_1)$ is the CDF of $X_1$ conditioned on $Y_1 = y_1$ and $Z_1 = z_1$.\footnote{With the exception of the discussion of English auctions in Section \ref{sec:eng}, the properties ``FOSD'' and ``MLR'' in this paper specifically concern positive dependence of $X_1$ with respect to $(Y_1,Z_1).$}
\end{definition}

\noindent The following characterization of FOSD is well known:

\begin{lemma} \label{lemma:pd-fosd-eq}
FOSD is equivalent to $\E{h(X_1)|Y_1 = y_1, Z_1 = z_1}$ being nondecreasing in $y_1$ and $z_1$ for any nondecreasing function $h:\R \mapsto \R$ for which the expectation exists.
\end{lemma}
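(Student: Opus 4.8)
The statement is the familiar characterization of first order stochastic dominance, applied to the family of conditional laws of $X_1$ given $(Y_1,Z_1) = (y_1,z_1)$ and viewed as parametrized by $(y_1,z_1)$. The plan is to prove the two implications separately. The reverse implication (monotone conditional expectations $\Rightarrow$ PD-FOSD) is immediate, while the forward implication (PD-FOSD $\Rightarrow$ monotone conditional expectations) needs a little measure-theoretic care because $h$ is only assumed nondecreasing, so it may have jumps and need not be differentiable.

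For the reverse implication, I would test the hypothesis against the nondecreasing indicator $h(x) = \indicator{x > t}$. For this choice $\E{h(X_1)|Y_1 = y_1, Z_1 = z_1} = \Prob{X_1 > t \mid Y_1 = y_1, Z_1 = z_1} = 1 - F_{X_1|Y_1 = y_1, Z_1 = z_1}(t)$, so the assumed monotonicity in $(y_1,z_1)$ is exactly the defining property in Definition~\ref{defn:pd-fosd}.

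For the forward implication, assume PD-FOSD and fix a nondecreasing $h$; since $X_1$ is supported on the bounded interval $[\underline{x},\overline{x}]$, the function $h$ is bounded there and the conditional expectation is finite. Write $\bar{F}_{y_1,z_1}(t) \triangleq \Prob{X_1 \geq t \mid Y_1 = y_1, Z_1 = z_1}$ and let $dh$ denote the Lebesgue--Stieltjes measure induced by $h$. The key step is the tail representation
\beqn
\E{h(X_1)|Y_1 = y_1, Z_1 = z_1} = h(\underline{x}) + \int_{(\underline{x},\overline{x}]} \bar{F}_{y_1,z_1}(t)\, dh(t),
\eeqn
which follows from Tonelli's theorem applied to $h(x) = h(\underline{x}) + \int_{(\underline{x},x]} dh(t)$. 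Because $h$ is nondecreasing, $dh$ is a nonnegative measure; because PD-FOSD makes $t \mapsto 1 - F_{X_1|Y_1 = y_1, Z_1 = z_1}(t)$ nondecreasing in $(y_1,z_1)$ for every fixed $t$, so is its left limit $\bar{F}_{y_1,z_1}(t)$. Integrating a family of integrands that is pointwise nondecreasing in the parameter against a fixed nonnegative measure preserves monotonicity, so the right-hand side, and hence the conditional expectation, is nondecreasing in $(y_1,z_1)$.

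The only genuine obstacle is the generality of $h$, which the Lebesgue--Stieltjes representation handles cleanly. If an elementary argument is preferred, I would instead first verify the claim for finite combinations $h(x) = c_0 + \sum_{i} c_i \indicator{x > t_i}$ with $c_i \geq 0$ for $i \geq 1$ (where it reduces to the indicator computation of the reverse direction together with the nonnegativity of the increments $c_i$), and then approximate a general nondecreasing $h$ by such step functions and pass to the limit by bounded convergence. Since a pointwise limit of functions nondecreasing in $(y_1,z_1)$ is again nondecreasing in $(y_1,z_1)$, this completes the proof.
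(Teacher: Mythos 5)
Your proof is correct, but there is nothing in the paper to compare it against: the paper states this lemma without proof, treating it (via the footnote citing Shaked and Shanthikumar) as the standard, well-known characterization of first order stochastic dominance. What you have supplied is the missing textbook argument, and it is sound. The reverse direction via $h(x)=\indicator{x>t}$ is exactly right. In the forward direction, your tail representation $\E{h(X_1) \,|\, Y_1=y_1, Z_1=z_1}=h(\underline{x})+\int_{(\underline{x},\overline{x}]}\bar{F}_{y_1,z_1}(t)\,dh(t)$ is the standard route, with one wrinkle worth flagging: the identity $h(x)=h(\underline{x})+\int_{(\underline{x},x]}dh(t)$ presumes $h$ is right-continuous when $dh$ is its Lebesgue--Stieltjes measure. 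A general nondecreasing $h$ need not be, so one should pass to the right-continuous modification of $h$; this alters $h$ only on its countable set of discontinuities, which is harmless here since the paper's conditional laws of $X_1$ have densities on $[\underline{x},\overline{x}]$. Your step-function fallback avoids even that issue and needs no atomlessness, but it has its own small technicality: a level set $\{x: h(x)>c\}$ of a nondecreasing $h$ may be closed on the left, so you also need indicators of the form $\indicator{x\geq t}$; these are covered because $\Prob{X_1\geq t \,|\, Y_1=y_1, Z_1=z_1}$ is the increasing limit of $1-F_{X_1|Y_1=y_1,Z_1=z_1}(t-1/m)$ as $m\to\infty$, hence also nondecreasing in $(y_1,z_1)$. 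For completeness, the proof usually given in the kind of reference the paper cites is quantile coupling: realize both conditional laws as generalized inverses $F^{-1}(U)$ of a single uniform $U$, so that PD-FOSD gives pointwise dominance of the coupled random variables and the inequality for every nondecreasing $h$ follows at once. Any of these three routes is acceptable; yours is essentially complete once the right-continuity point is addressed.
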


The monotone likelihood ratio property and affiliation are two more restrictive notions of positive dependence among variables that are also commonly used in auction theory. The versions that we use here are as follows:

\begin{definition}[MLR] \label{defn:pd-mlr}
Assume that for any $y_1$ and $z_1$, the pdf of $X_1$ conditioned on $Y_1 = y_1$ and $Z_1 = z_1$, denoted by $f_{X_1|Y_1 = y_1, Z_1 = z_1}(x_1)$, exists and is positive everywhere on $[\underline{x},\overline{x}]$. The random variable $X_1$ is \textit{positively dependent on the random variables $(Y_1,Z_1)$ in the monotone likelihood ratio (MLR)} sense if $f_{X_1|Y_1 = y_1, Z_1 = z_1}(x_1) / f_{X_1|Y_1 = \wh{y}_1, Z_1 = \wh{z}_1}(x_1)$ is nondecreasing in $x_1$ for any $y_1 \geq \widehat{y}_1$ and $z_1 \geq \widehat{z}_1$.
\end{definition}

\begin{definition}[Affiliation] \label{defn:pd-affiliation}
Assume that the joint pdf of $(X_1,\mb{Y})$, denoted by $f_{X_1,\mb{Y}}(x_1,\mb{y})$, exists and is positive everywhere on $[\underline{x},\overline{x}]\times[\underline{y},\overline{y}]^N$. The random variables $(X_1,\mb{Y})$ are \textit{affiliated} if
\beqn
f_{X_1,\mb{Y}}( (x_1,\mb{y}) \vee (\wh{x}_1,\wh{\mb{y}}) ) f_{X_1,\mb{Y}}( (x_1,\mb{y}) \wedge (\wh{x}_1,\wh{\mb{y}}) ) 
\geq f_{X_1,\mb{Y}}(x_1,\mb{y})f_{X_1,\mb{Y}}(\wh{x}_1,\wh{\mb{y}}),
\eeqn
for any $(x_1,\mb{y})$ and $(\wh{x}_1,\wh{\mb{y}})$ in the support of $(X_1,\mb{Y})$. Here ``$\vee$'' denotes coordinatewise maximum and ``$\wedge$'' denotes coordinatewise minimum.
\end{definition}

Under Assumption \ref{assumption:symmetry}, the following relationship between affiliation, MLR, and FOSD holds:\footnote{Assumption \ref{assumption:symmetry} is used in showing that if the random variables $(X_1,\mb{Y})$ are affiliated then so are the random variables $(X_1,Y_1,Z_1)$; see \cite{Milgrom&Weber82}. Lemma \ref{lemma:pd} then follows from the known relationship between affiliation, MLR, and FOSD; see, e.g., Chapter $1$ of \cite{Shaked&Shanthikumar06} and Appendix D of \cite{Krishna2002}.}

\begin{lemma} \label{lemma:pd}
Affiliation implies MLR and MLR implies FOSD.
\end{lemma}

Our focus is on comparing MLR and FOSD. Lemma \ref{lemma:pd} implies that results obtained by assuming FOSD hold if MLR is assumed instead, and results obtained by assuming MLR hold if affiliation is assumed instead. It is common in auction theory to justify the assumption of either affiliation or the MLR property by citing either the defining property of FOSD or the property that characterizes it in Lemma \ref{lemma:pd-fosd-eq}.\footnote{Quoting \cite{Milgrom&Weber82}:``Roughly, this (affiliation) means that a high value of one bidder's estimate makes  high values of the others' estimates more likely.'' This appealing intuition for affiliation, however, suggests the shifting of a probability distribution with the observation of a higher estimate as in first order stochastic dominance and not the inequality that defines affiliation. \cite{deCastro2010} provides some additional examples and references where affiliation is used to obtain important results in economics and finance.} The relationship in Lemma~\ref{lemma:pd}, however, does not go in the reverse direction: affiliation is strictly stronger than MLR,\footnote{In the case of second price auctions, affiliation among $X_{1}$ and $\mb{Y}$ is unnecessary; all that is needed for the analysis of \cite{Milgrom&Weber82} is affiliation among $X_{1}$, $Y_{1}$, and $Z_{1}$. Even this weaker form of affiliation, however, is strictly stronger than the MLR property we use in this paper. Affiliation among $X_{1}$, $Y_{1}$, and $ Z_{1}$ implies that an MLR ordering property holds for \textit{any} possible conditioning among these variables, e.g.,  $(X_{1}|Y_{1},Z_{1}),$ $(X_{1},Y_{1}|Z_{1}),$ $(Y_{1}|Z_{1}),$ etc.; the MLR property we use constrains only the conditioning $(X_{1} |Y_{1},Z_{1}).$   In particular, it does not require that $Y_{1}$ and $Z_{1}$ be affiliated. For example, the MLR property holds under the following assumptions: the signals $Y_{1},\ldots ,Y_{n},$ have any symmetric joint pdf; a common value $X$ is assumed (so $X=X_i$ for all $i$); $X$ is conditionally independent of $Y_{1},\ldots ,Y_{n}$  given $Y^*,$  where $Y^*$ denotes the maximum of the signal values; and the conditional distribution of  $X$ given $Y^*=y^*$ is nondecreasing (in the MLR order as a distribution for $X$)  with respect to $y^*.$ That is so because $Y^*$ is then a nondecreasing function of $(Y_1,Z_1)$  (namely, $Y^*=\max\{Y_1, Z_1\}$) and, in turn, the conditional distribution of $X$ is MLR nondecreasing in $Y^*.$} and, as discussed further in Section~\ref{sec:pos-dep}, MLR is strictly stronger than FOSD.

The buyers are assumed to be risk averse or risk neutral. Each buyer has the same von Neumann-Morgenstern utility of money, denoted by $u: \R \rightarrow \R$, which is concave (possibly linear), increasing, and normalized so that $u(0) = 0$. Henceforth, the term \textit{risk averse} includes risk neutral behavior. The seller is risk neutral. Conditioned on any $y_1$ and $z_1$, the expected utility of the resource to buyer $1$ without any payments is assumed to be positive, i.e., $\E{u(X_1 - I) | Y_1 = y_1, Z_1 = z_1} > 0$. Thus, the buyers who compete for the resource expect to make a positive profit from utilizing it.

As in \cite{DeMarzo2005}, buyers bid with securities from some ordered family. Let $\Phi \triangleq \{\phi(\cdot,b) \left|\right. b \in [\underline{b},\overline{b}]\}$ be a family of securities parametrized by $b$. A bid $b$ of buyer $1$ denotes his willingness to pay an amount $\phi(x_1,b)$ to the seller if $X_1 = x_1$. The interval $[\underline{b},\overline{b}]$ can be normalized to any arbitrary closed interval, independently of $\phi$, by translation and rescaling of the parameter~$b$ in $\phi(\cdot,b)$. It is therefore without loss of generality that we assume all families are parametrized by the same interval $[\underline{b},\overline{b}]$. The family $\Phi$ is assumed to satisfy the following conditions:

\begin{assumption}\label{assumption:admissible-security}
For any $b$, $\phi(x_1,b)$ is continuous and nondecreasing in $x_1$, and $x_1 - \phi(x_1,b)$ is nondecreasing and nonconstant in $x_1$.
\end{assumption}
\noindent Assumption \ref{assumption:admissible-security} implies that the payment made to the seller and the profit of the winning buyer are both nondecreasing in the realized value of the resource.

\begin{assumption} \label{assumption:ordered-security}
For any $y_1$ and $z_1$, 
\begin{enumerate}[(i)]
\item
$\E{u(X_1 - I - \phi(X_1,b)) | Y_1 = y_1, Z_1 = z_1}$ is continuous and decreasing in $b$, nonnegative for $b = \underline{b}$, and nonpositive for $b = \overline{b}$.
\item
$\E{\phi(X_1,b)|Y_1 = y_1, Z_1 = z_1}$ is continuous and increasing in $b$.
\end{enumerate}
\end{assumption}

\noindent Assumption \ref{assumption:ordered-security} says that the family of securities is completely ordered from the perspectives of both the winning buyer and the seller, independently of the realized signal vector. Buyers prefer lower security bids and the seller prefers higher security bids. Assumption \ref{assumption:ordered-security} is satisfied if, e.g., for any $x_1$, $\phi(x_1,b)$ is increasing in $b$. The seller uses the second price auction where the highest bidder wins and pays the security bid of the second highest bidder. As discussed in the next section, continuity together with the boundary conditions in Assumption \ref{assumption:ordered-security}(i) guarantee the existence of a pure strategy equilibrium for the second price auction. Notice that Assumption \ref{assumption:ordered-security}(i) restricts the possible values of $I$, e.g., if $\phi(x_1,b) \leq x_1$ for all $x_1$ and $b$ then $I > 0$.

Some common families of securities that satisfy Assumptions \ref{assumption:admissible-security} and \ref{assumption:ordered-security} are: \textit{cash} $\phi(x_1,b) = b, ~b \in [0,\overline{x}]$; \textit{debt} $\phi(x_1,b) = \min(x_1, b), ~b \in [0,\overline{x}]$; \textit{equity} $\phi(x_1,b) = b x_1, ~b \in [0,1-\delta]$ for some small $\delta > 0$; and \textit{call option} $\phi(x_1,b) = \max\{x_1 - \overline{x} + b, 0\}, ~b \in [0,\overline{x}]$. These families of securities are shown in Figure~\ref{fig:fig-securities}.

\begin{figure}[ht] 
\begin{center}
\includegraphics[trim=0.2in 4.5in 0.2in 0.45in, clip=true, height=1.7in]{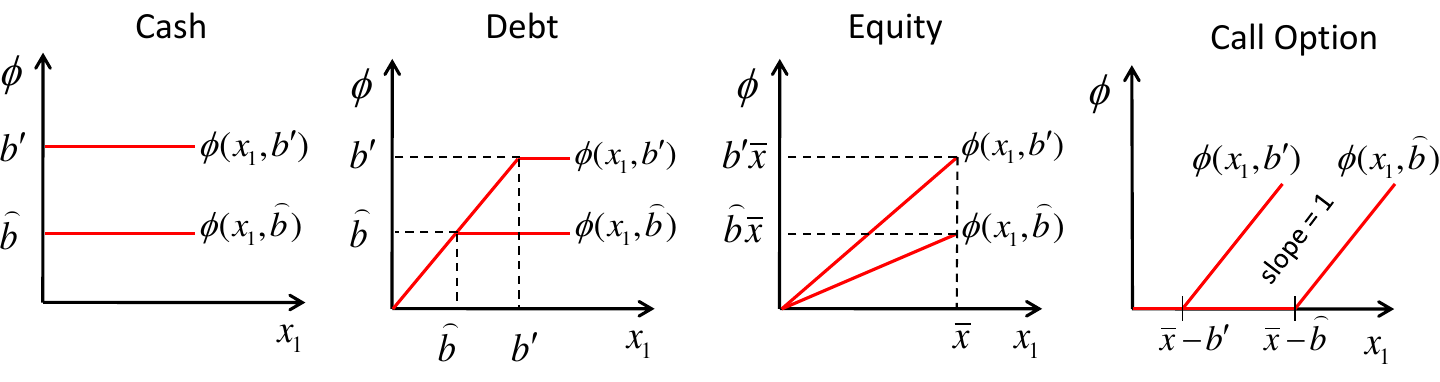}
\caption{\small \sl Plots of the families of securities cash, debt, equity, and call option for $b^{'} > \wh{b}$.\label{fig:fig-securities}} 
\end{center} 
\end{figure}

Assumptions \ref{assumption:symmetry}-\ref{assumption:ordered-security} are in place for the rest of this paper. For a comparison between two different families of securities, we use $\Psi \triangleq \{\psi(\cdot,b) \left|\right. b \in [\underline{b},\overline{b}]\}$ to denote a family of securities different from~$\Phi$.  All expectations and conditional expectations of interest are assumed to exist and be finite. 

\section{Risk Aversion} \label{sec:risk-aversion}
This section extends the result of \cite{DeMarzo2005} on revenue ranking of families of securities to risk averse buyers. In a second price auction, a buyer $n$ decides how much to bid solely based on his signal~$y_n$. We look for a symmetric equilibrium. We start by defining a function $s(y_1,z_1;\Phi)$ that will be used to characterize the bidding strategies of the buyers:
\beq{s-sp}
s(y_1, z_1;\Phi) \triangleq b : \E{u(X_1 - I - \phi(X_1,b)) | Y_1 = y_1, Z_1 = z_1} = 0.
\eeq
The uniqueness of $b$ in \eqref{eq:s-sp} follows from Assumption \ref{assumption:ordered-security}. The value $s(y_1, z_1;\Phi)$ is the security bid that makes buyer $1$ indifferent between acquiring and not acquiring the resource given his signal $y_{1}$ and the highest signal $z_{1}$ of the other buyers. Notice that the bid $s(y_1, z_1;\Phi)$ corresponds to buyer $1$'s willingness to pay an amount $\phi(x_1,s(y_1, z_1;\Phi))$ to the seller if $X_1 = x_1$. The next lemma characterizes an important property of the function $s$.

\begin{lemma} \label{lemma:s-sp-properties}
Assuming FOSD, the function $s(y_1, z_1;\Phi)$ is nondecreasing in $y_1$ and $z_1$.
\end{lemma}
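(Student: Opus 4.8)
The plan is to treat $s(y_1,z_1;\Phi)$ as the implicitly defined zero crossing of the map $b \mapsto \E{u(X_1 - I - \phi(X_1,b)) \mid Y_1 = y_1, Z_1 = z_1}$, and to combine the monotonicity of this map in $b$ (from Assumption~\ref{assumption:ordered-security}) with its monotonicity in $(y_1,z_1)$ (to be extracted from PD-FOSD) via a standard comparative-statics argument. Write $g(b;y_1,z_1) \triangleq \E{u(X_1 - I - \phi(X_1,b)) \mid Y_1 = y_1, Z_1 = z_1}$, so that $g(s(y_1,z_1;\Phi);y_1,z_1)=0$.

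The first and only substantive step is to fix $b$ and observe that the integrand $x_1 \mapsto u(x_1 - I - \phi(x_1,b))$ is nondecreasing in $x_1$. Indeed, by Assumption~\ref{assumption:admissible-security} the profit $x_1 - \phi(x_1,b)$ is nondecreasing in $x_1$, hence so is $x_1 - I - \phi(x_1,b)$, and since $u$ is increasing the composition is nondecreasing. This is precisely what lets PD-FOSD bite: applying Lemma~\ref{lemma:pd-fosd-eq} with $h(x_1) = u(x_1 - I - \phi(x_1,b))$ shows that $g(b;y_1,z_1)$ is nondecreasing in $y_1$ and $z_1$ for each fixed $b$.

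The second step is the comparative statics. Fix $y_1 \ge \wh{y}_1$ and $z_1 \ge \wh{z}_1$ and set $b = s(\wh{y}_1,\wh{z}_1;\Phi)$, so that $g(b;\wh{y}_1,\wh{z}_1)=0$. Monotonicity in the signals gives $g(b;y_1,z_1) \ge g(b;\wh{y}_1,\wh{z}_1) = 0$. Since Assumption~\ref{assumption:ordered-security}(i) makes $g(\cdot;y_1,z_1)$ decreasing with its unique zero at $s(y_1,z_1;\Phi)$, the inequality $g(b;y_1,z_1)\ge 0 = g(s(y_1,z_1;\Phi);y_1,z_1)$ forces $b \le s(y_1,z_1;\Phi)$; that is, $s(\wh{y}_1,\wh{z}_1;\Phi) \le s(y_1,z_1;\Phi)$, which is the claim.

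The argument is short, and the real content is entirely in the first step: recognizing that the security structure of Assumption~\ref{assumption:admissible-security} renders the buyer's realized utility a nondecreasing function of $x_1$, so that the expectation-of-a-monotone-function characterization of Lemma~\ref{lemma:pd-fosd-eq} is directly applicable. The point worth flagging is that PD-FOSD alone suffices here precisely because one only needs the expectation of a monotone function to move in the right direction, not any likelihood-ratio ordering of the conditional densities; given that, the comparative-statics conclusion is routine once monotonicity of $g$ in $b$ from Assumption~\ref{assumption:ordered-security} is invoked.
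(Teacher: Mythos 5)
Your proof is correct and follows essentially the same route as the paper: the paper likewise notes that $x_1 - I - \phi(x_1,b)$ is nondecreasing in $x_1$ (Assumption~\ref{assumption:admissible-security}) and $u$ is increasing, then invokes Lemma~\ref{lemma:pd-fosd-eq}. The only difference is that you spell out the comparative-statics step (monotonicity of the expectation in $b$ from Assumption~\ref{assumption:ordered-security} forcing the zero crossing to move up), which the paper leaves implicit in the phrase ``the claim follows immediately.''
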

\begin{proof}
Since $x_1 - I - \phi(x_1,b)$ is nondecreasing in $x_1$ by Assumption \ref{assumption:admissible-security} and $u$ is an increasing function, the claim follows immediately from Lemma~\ref{lemma:pd-fosd-eq}.
\end{proof}

To simplify the analysis in the rest of this paper, we reinforce Lemma \ref{lemma:s-sp-properties} with the following additional assumption:
\begin{assumption} \label{assumption:s-sp-monotonicity}
The family of securities and the informational environments are such that the function $s(y_1, z_1;\Phi)$ is \textit{increasing} in $y_1$.
\end{assumption}

\noindent Assumption \ref{assumption:s-sp-monotonicity} simplifies the analysis in this paper by insuring that ties among equilibrium bids (as specified in Lemma \ref{lemma:eq-s-sp} below) occur with probability zero. We therefore ignore the possibility of ties in the remainder of the paper, except in footnote \ref{tie-breaking} later in this paper. Assumption \ref{assumption:s-sp-monotonicity} is satisfied in most cases of interest; e.g., since $x_1 - \phi(x_1,b)$ is assumed to be nondecreasing and nonconstant in $x_1$, if for any $x_1 \in (\underline{x},\overline{x})$, $1 - F_{X_1|Y_1 = y_1, Z_1 = z_1}(x_1)$ is increasing in $y_1$, then Assumption \ref{assumption:s-sp-monotonicity} is automatically satisfied. The results of this paper hold without Assumption~\ref{assumption:s-sp-monotonicity} under uniform tie breaking, though the analysis is more complicated.

The next lemma characterizes an equilibrium bidding strategy for the second price auction with bids restricted to the family $\Phi$. The construction of the bidding strategy follows \cite{Milgrom&Weber82}.

\begin{lemma} \label{lemma:eq-s-sp}
Let the strategies $\beta_1, \beta_2, \ldots, \beta_N$ of the buyers be identical and defined by $\beta_n(y_n) \triangleq s(y_n, y_n; \Phi)$ for all~$n$. Assuming FOSD, the strategy vector $(\beta_1, \beta_2, \ldots, \beta_N)$ is a symmetric Bayes-Nash equilibrium of the second price auction with bids restricted to the family $\Phi$.\footnote{Our analysis and results are only for the symmetric model and for a symmetric equilibrium that is monotone, as is customary in the auction theory literature. The literature is sparse in the case of asymmetry and the results obtained in the symmetric case need not apply to the asymmetric case; see Chapter $8$ of \cite{Krishna2002} for further details.}
\end{lemma}
\begin{proof}
Assume that each buyer $n$ except buyer $1$ uses the strategy $\beta_n(y_n) = s(y_n, y_n; \Phi)$. We will show that the best response for buyer $1$ is to use the strategy $\beta_1(y_1) = s(y_1, y_1;\Phi)$.

Given $y_1$, let buyer $1$ bid $b$. Buyer $1$ wins if $b \geq \max \{s(y_n, y_n; \Phi): 2 \leq n \leq N \}$. From Lemma~\ref{lemma:s-sp-properties}, $\max\{s(y_n, y_n; \Phi): 2 \leq n \leq N\} = s(z_1, z_1; \Phi)$, where $z_1 = \max\{y_2, y_3, \ldots, y_N\}$. Thus, the expected utility of buyer $1$ is given by:
\begin{multline*}
\E{u\left(X_1 - I - \phi(X_1, s(Z_1,Z_1; \Phi)) \right) \indicator{b \geq s(Z_1,Z_1; \Phi)} | Y_1 = y_1} \\
= \E{\E{u\left(X_1 - I - \phi(X_1, s(Z_1,Z_1; \Phi)) \right) | Y_1 = y_1, Z_1} \indicator{b \geq s(Z_1,Z_1; \Phi)} | Y_1 = y_1}.
\end{multline*}
From \eqref{eq:s-sp}, $\E{u\left(X_1 - I - \phi(X_1, s(y_1,Z_1; \Phi)) \right) | Y_1 = y_1, Z_1} = 0$ and from Assumption \ref{assumption:s-sp-monotonicity}, $s(y_1,z_1; \Phi)$ is increasing in $y_1$. By Assumption \ref{assumption:ordered-security}, $\E{u\left(X_1 - I - \phi(X_1, s(Z_1,Z_1; \Phi)) \right) | Y_1 = y_1, Z_1}$ is therefore positive for $Z_1 < y_1$ and negative for $Z_1 > y_1$. The expected utility of buyer $1$ is uniquely maximized by setting $b = s(y_1, y_1; \Phi)$.
\end{proof}

Because of symmetry, the seller's expected revenue equals the expected payment made by buyer~$1$ conditioned on him winning. In the symmetric equilibrium given by Lemma \ref{lemma:eq-s-sp}, the bid of buyer $1$ is the highest if and only if his signal is the highest among all the buyers (i.e., $y_1 > z_1$). If buyer $1$ wins, his payment is determined by the second highest security bid (i.e., $s(z_1, z_1; \Phi))$. Thus, the seller's expected revenue from the second price auction with bids restricted to the family~$\Phi$ is $\E{\phi(X_1, s(Z_1, Z_1; \Phi)) | Y_1 > Z_1}$.\footnote{\label{tie-breaking}In the absence of Assumption \ref{assumption:s-sp-monotonicity}, $s(y_1, y_1; \Phi)$ need not be increasing in $y_1$ and ties can occur with positive probability. However, if we assume uniform tie breaking, the seller's expected revenue can still be shown to be $\E{\phi(X_1, s(Z_1, Z_1; \Phi)) | Y_1 > Z_1}$, implying that it is enough to assume uniform tie breaking to preserve all the results in the paper and Assumption~\ref{assumption:s-sp-monotonicity} can be dropped. To see this, notice that this is indeed the case if ties could be broken in favor of the buyer with the highest signal (which cannot be implemented because the seller cannot infer buyers' signals from their bids if $s(y_1, y_1; \Phi)$ is not invertible in $y_1$). Let $\mc{Y}$ be an interval such that $s(y_1, y_1; \Phi)$ is constant for $y_1 \in \mc{Y}$; let $r$ be this constant. The event $\{Y_1 \in \mc{Y}, Z_1 \in \mc{Y}, \text{buyer $1$ wins under uniform tie breaking} \}$ has the same probability as the event $\{Y_1 \in \mc{Y}, Z_1 \in \mc{Y}, Y_1 > Z_1\}$ and for any outcome in either of these two events, the winning buyer pays the security bid $r$.}

We next reformulate the definition of steepness from \cite{DeMarzo2005} using the concept of quasi-monotonicity, as defined below:

\begin{definition}[Quasi-monotone function] \label{defn:quasi-monotone}
A function $g(w)$ is \textit{quasi-monotone} if for any $w$ and $\wh{w}$ such that $\wh{w} < w$, if $g(\wh{w}) > 0$ then $g(w) \geq 0$. A quasi-monotone function therefore crosses zero at most once and from below.
\end{definition}

\begin{definition}[Steepness] \label{defn:steeper-security}
A family of securities $\Phi$ is \textit{steeper} than another family of securities $\Psi$ if for any $b^{'}, \widehat{b} \in [\underline{b},\overline{b}]$, $\phi(w,b^{'}) - \psi(w,\widehat{b})$ is quasi-monotone in $w$.
\end{definition}
\noindent Notice that call option is steeper than equity and debt, equity is steeper than debt, and all three of these families are steeper than cash (see Figure \ref{fig:fig-securities}).\footnote{Quasi-monotonicity is not transitive and hence steepness is not transitive. Proposition \ref{prop:sp-main-result} provides a pairwise revenue ranking for any two families of securities that are ordered under the steepness criteria. This revenue ranking, however, is transitive.}

Proposition \ref{prop:sp-main-result} below states that steepness ordering is a sufficient condition under which two different families of securities can be ranked according to the revenue they generate. The proof is in Appendix \ref{sec:proof-sp-main-result}.

\begin{proposition} \label{prop:sp-main-result}
Let $\Phi$ and $\Psi$ be two families of securities such that $\Phi$ is steeper than $\Psi$. Assuming MLR, the second price auction with bids restricted to $\Phi$ generates at least as much expected revenue for the seller as the second price auction with bids restricted to $\Psi$.
\end{proposition}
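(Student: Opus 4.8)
The plan is to work directly from the expected-revenue formula derived just above, namely that the seller collects $\E{\phi(X_1, s(Z_1,Z_1;\Phi)) \mid Y_1 > Z_1}$ under $\Phi$ and $\E{\psi(X_1, s(Z_1,Z_1;\Psi)) \mid Y_1 > Z_1}$ under $\Psi$. Since both revenues condition on the same event $\{Y_1>Z_1\}$, conditioning further on $(Y_1,Z_1)=(y_1,z_1)$ and using the tower property reduces the proposition to the pointwise claim that, for every $(y_1,z_1)$ with $y_1>z_1$,
\[
\E{g(X_1) \mid Y_1=y_1, Z_1=z_1} \geq 0, \qquad g(x_1) \triangleq \phi\bigl(x_1, s(z_1,z_1;\Phi)\bigr) - \psi\bigl(x_1, s(z_1,z_1;\Psi)\bigr).
\]
For fixed $z_1$ the two bids are constants, so by Definition~\ref{defn:steeper-security} and continuity (Assumption~\ref{assumption:admissible-security}) the function $g$ is quasi-monotone; if it never changes sign the claim is immediate, so assume it crosses zero once from below at a point $w^\ast$ with $g(w^\ast)=0$.

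First I would prove the claim on the diagonal $y_1=z_1$, which is the only place risk aversion is used. Set $\pi_\Phi(x_1) \triangleq x_1 - I - \phi(x_1, s(z_1,z_1;\Phi))$ and define $\pi_\Psi$ analogously, so that $\pi_\Psi - \pi_\Phi = g$ and, by Assumption~\ref{assumption:admissible-security}, both $\pi_\Phi$ and $\pi_\Psi$ are nondecreasing. The defining relation \eqref{eq:s-sp} gives the two indifference conditions $\E{u(\pi_\Phi(X_1)) \mid Y_1=z_1,Z_1=z_1} = \E{u(\pi_\Psi(X_1)) \mid Y_1=z_1,Z_1=z_1} = 0$. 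At the crossing point the two profits agree, $\pi_\Phi(w^\ast)=\pi_\Psi(w^\ast)=:c^\ast$; let $\lambda$ be a supergradient of the concave function $u$ at $c^\ast$, which is positive since $u$ is increasing. Because $u$ is concave its difference quotients are nonincreasing, so $\lambda$ is at least every secant slope of $u$ above $c^\ast$ and at most every secant slope below $c^\ast$; combined with the facts that $\pi_\Psi \leq \pi_\Phi \leq c^\ast$ for $x_1<w^\ast$ and $c^\ast \leq \pi_\Phi \leq \pi_\Psi$ for $x_1>w^\ast$, this yields on both sides of $w^\ast$ the single pointwise bound $u(\pi_\Psi(x_1)) - u(\pi_\Phi(x_1)) \leq \lambda\, g(x_1)$. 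Taking the conditional expectation at $(z_1,z_1)$, the left-hand side integrates to $0$, so $0 \leq \lambda\,\E{g(X_1)\mid Y_1=z_1,Z_1=z_1}$ and hence $\E{g(X_1)\mid Y_1=z_1,Z_1=z_1}\geq 0$. (For linear $u$ this holds with equality, recovering the risk-neutral diagonal; concavity is exactly what upgrades it to an inequality and thereby extends the ranking to risk aversion.)

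Next I would propagate the diagonal bound to $y_1>z_1$ using PD-MLR. Applying Definition~\ref{defn:pd-mlr} with $\widehat{y}_1=\widehat{z}_1=z_1$, the likelihood ratio $r(x_1) \triangleq f_{X_1\mid Y_1=y_1,Z_1=z_1}(x_1)/ f_{X_1\mid Y_1=z_1,Z_1=z_1}(x_1)$ is nonnegative and nondecreasing. Because $g$ is quasi-monotone through $w^\ast$, the factors $g(x_1)$ and $r(x_1)-r(w^\ast)$ share sign at every $x_1$, so their product is nonnegative; integrating against $f_{X_1\mid Y_1=z_1,Z_1=z_1}$ gives
\[
\E{g(X_1)\mid Y_1=y_1,Z_1=z_1} \geq r(w^\ast)\,\E{g(X_1)\mid Y_1=z_1,Z_1=z_1} \geq 0,
\]
which is the pointwise claim and hence the proposition.

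I expect the diagonal concavity step to be the main obstacle, since it is the sole departure from the risk-neutral theory and requires the single uniform bound $u(\pi_\Psi)-u(\pi_\Phi)\leq \lambda g$ to be pieced together from the single-crossing structure of $g$ and the monotone slopes of $u$ in just the right way. It is worth flagging why PD-MLR, and not merely PD-FOSD, is indispensable in the propagation step: if $g$ were monotone, FOSD (via Lemma~\ref{lemma:pd-fosd-eq}) would already preserve the sign of the conditional expectation under an upward signal shift, but $g$ is only quasi-monotone, and keeping $\E{g\mid y_1,z_1}$ nonnegative genuinely needs the likelihood-ratio ordering — precisely the slack that the FOSD counterexample of Section~\ref{sec:pos-dep} exploits.
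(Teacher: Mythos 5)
Your proposal is correct, and its skeleton coincides with the paper's: the same reduction of the revenue comparison to the pointwise claim $\mathbb{E}[g(X_1)\,|\,Y_1=y_1,Z_1=z_1]\geq 0$ for $y_1>z_1$, the same two-stage argument (diagonal case $Y_1=Z_1=z_1$ via concavity, then propagation to $y_1>z_1$ via PD-MLR), and your propagation step is, essentially line for line, the paper's Lemma~\ref{lemma:likelihood-ineq}: multiply by the likelihood ratio, bound it by its value at the crossing point, and integrate. Where you genuinely depart is the diagonal step. The paper proves a standalone result (Lemma~\ref{lemma:ohlin}, extending a lemma of Ohlin) by passing to the distribution functions $F_i(t)=\Prob{g_i(W)\leq t}$, deriving the identity $\E{h(g_1(W))}-\E{h(g_2(W))}=\int_{-\infty}^{\infty}(F_2(t)-F_1(t))h^{'}(t)dt$, and bounding $h^{'}$ by its value at the crossing value; combined with the indifference conditions \eqref{eq:proof-sp-main-result-eq2} this yields \eqref{eq:proof-sp-main-result-eq3}. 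You instead establish the pointwise supergradient bound $u(\pi_\Psi(x_1))-u(\pi_\Phi(x_1))\leq \lambda\, g(x_1)$, using the ordering $\pi_\Psi\leq\pi_\Phi\leq c^{\ast}$ to the left of $w^{\ast}$ (and its mirror image to the right) together with the monotone difference quotients of the concave $u$, and then simply integrate; your positivity of $\lambda$ is the counterpart of the paper's hypothesis $h^{'}(g_1(w_0))>0$. The two arguments encode the same idea---compare $u$ with the affine function of slope $\lambda$ through the common crossing value---but yours is more elementary, avoiding the CDF/Fubini manipulations, while the paper's packaging gives a reusable lemma. One small gloss: in your ``never changes sign'' case, $g\geq 0$ everywhere is indeed immediate, but $g\leq 0$ everywhere is not---there you need the indifference conditions (as the paper notes, a strict ordering of the two payment functions contradicts \eqref{eq:proof-sp-main-result-eq2}) to force $g\equiv 0$, using that $u$ is increasing and the conditional density is positive; this is a one-line fix, not a gap in the substance of the argument.
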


\noindent A careful review of the proof of Proposition \ref{prop:sp-main-result} shows that we in fact prove the stronger result that the expected revenue of the seller conditioned on the winning buyer's signal and the second highest signal is at least as large in the case of the steeper family of securities $\Phi$ as with the family~$\Psi$. The revenue from the steeper family thus weakly dominates in this ex-post sense, which implies that it is weakly better for the seller ex-ante as stated in the proposition. 

The following is an immediate consequence of Proposition \ref{prop:sp-main-result}:

\begin{corollary} \label{corollary:rev-rank-sp-ex}
Assuming MLR, the expected revenue from the following families of securities can be ranked as: cash $\leq$ debt $\leq$ equity $\leq$ call option.
\end{corollary}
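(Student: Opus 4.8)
The plan is to derive the corollary directly from Proposition \ref{prop:sp-main-result} by verifying the pairwise steepness relations among the four families and then chaining the resulting revenue inequalities. The key observation, already noted after Definition \ref{defn:steeper-security}, is that each family in the list is steeper than the one preceding it: debt is steeper than pure cash, equity is steeper than debt, and the call option is steeper than equity. Once these three steepness relations are in hand, Proposition \ref{prop:sp-main-result} applies under PD-MLR to each consecutive pair, and the three revenue inequalities it produces compose to give exactly the claimed chain.

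First I would confirm the steepness relations by inspecting the relevant differences $\phi(w,b^{'}) - \psi(w,\widehat{b})$, each of which is piecewise linear in $w$ with a single breakpoint. For debt versus cash, $\min(w,b^{'}) - \widehat{b}$ is nondecreasing in $w$ and hence trivially quasi-monotone. For equity versus debt, $b^{'} w - \min(w,\widehat{b})$ equals $(b^{'}-1)w \le 0$ for $w \le \widehat{b}$ (using that the equity parameter $b^{'}$ lies in $[0,1]$) and is increasing for $w > \widehat{b}$. For the call option versus equity, $\pos{w - \overline{x} + b^{'}} - \widehat{b} w$ equals $-\widehat{b} w \le 0$ for $w < \overline{x} - b^{'}$ and has the nonnegative slope $1 - \widehat{b}$ for $w \ge \overline{x} - b^{'}$ (again using that the equity parameter $\widehat{b}$ lies in $[0,1]$). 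In all three cases the difference is nonpositive on $[\underline{x},\overline{x}]$ up to the breakpoint and nondecreasing thereafter, so it crosses zero at most once and from below; this is precisely quasi-monotonicity in the sense of Definition \ref{defn:quasi-monotone}, and it holds for every choice of the two parameters, which is what Definition \ref{defn:steeper-security} requires.

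With the three steepness relations established, I would apply Proposition \ref{prop:sp-main-result} three times under PD-MLR to obtain that the seller's expected revenue from pure cash bids is at most that from debt, the revenue from debt is at most that from equity, and the revenue from equity is at most that from the call option. Composing these inequalities yields the stated ranking. The only subtlety worth flagging is that steepness itself is not transitive (as the footnote to Definition \ref{defn:steeper-security} warns), so the chain cannot be obtained by a single application of the proposition to, say, the cash/call-option pair; rather, it is the revenue ordering delivered by the proposition that is transitive, and the argument must proceed pair by pair along consecutive families. I expect no genuine obstacle here: the entire content is the elementary quasi-monotonicity check for the three piecewise-linear differences above, after which the conclusion is immediate from Proposition \ref{prop:sp-main-result}.
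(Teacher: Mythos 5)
Your proposal is correct and follows essentially the same route as the paper: the corollary is treated there as an immediate consequence of Proposition \ref{prop:sp-main-result} combined with the pairwise steepness relations (call option steeper than equity, equity steeper than debt, debt steeper than cash) noted after Definition \ref{defn:steeper-security}, applied pair by pair precisely because steepness is not transitive while the resulting revenue ordering is. Your explicit quasi-monotonicity checks for the three piecewise-linear differences simply fill in the elementary verification that the paper leaves to inspection of Figure \ref{fig:fig-securities}.
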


\noindent The revenue ranking of families of securities of \cite{DeMarzo2005} is essentially Proposition~\ref{prop:sp-main-result} and Corollary \ref{corollary:rev-rank-sp-ex} in the case of risk neutral buyers and affiliated signals and values.

\section{Positive Dependence} \label{sec:pos-dep}
This section addresses the role of the positive dependence assumption in the ranking of families of securities. An example is first discussed that shows that the ranking of Proposition \ref{prop:sp-main-result} does not hold if MLR is relaxed to FOSD.\footnote{Interestingly, the example assumes independent private values among the buyers; it does not rely upon interdependence of values and the problems of inference that it creates, which is commonly the source of problems in models of trading.} The pairwise ranking of the three families of securities -- debt, equity, and call options -- is completely reversed in this example in comparison to the ranking in Corollary \ref{corollary:rev-rank-sp-ex}. If MLR is relaxed to FOSD, the relative steepness condition must be strengthened in order to rank two families of securities. This is accomplished by using the notion of \textit{strong steepness} that we define below. 

\begin{example} \label{eg:ex-fosd}
Consider two risk neutral buyers (i.e., $u(w) = w$) with independent private values. Buyer $n$'s signal $Y_n$ is uniformly distributed in the interval $[0,1]$. Conditioned on $Y_n = y_n$, the random variable $X_n$, denoting the value of buyer $n$, has the following conditional pdf:
\beq{ex-pdf}
f_{X_n|Y_n = y_n}(x_n) = \left\{ 
\begin{array}{l l}
  1 - y_n + 6x_ny_n & \quad \text{if $0 \leq x_n \leq \frac{1}{3}$,}\\
  1 & \quad \text{if $\frac{1}{3} < x_n \leq 1$.}\\
\end{array} \right.
\eeq

\begin{figure}[ht] 
\begin{center}
\includegraphics[trim=0.45in 4.25in 0.45in 0.25in, clip=true, height=2.1in]{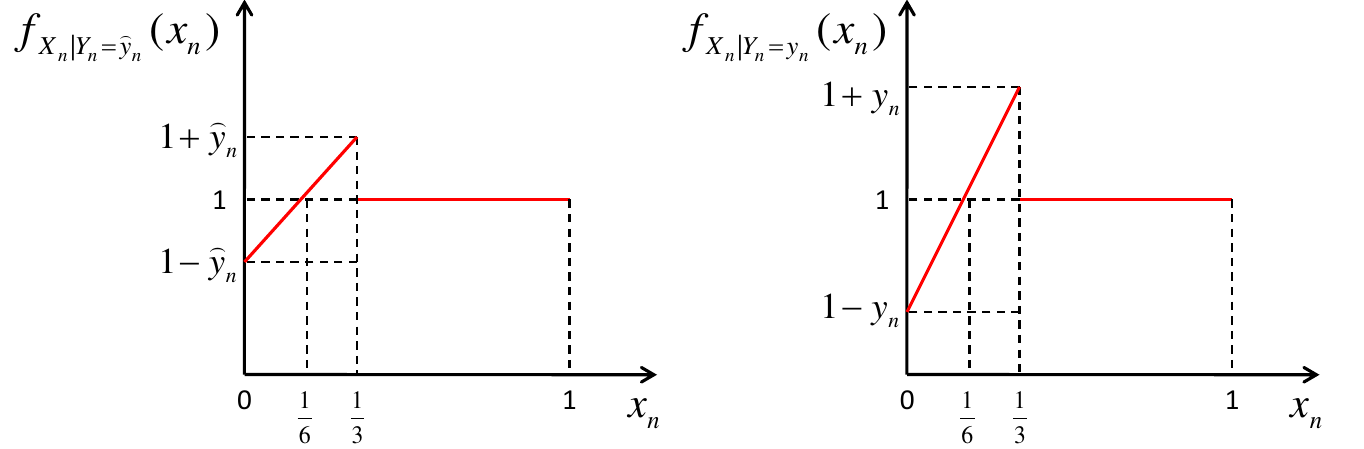}
\caption{\small \sl The pdf of $X_n$ conditioned on $Y_n = \wh{y}_n$ and conditioned on $Y_n = y_n$ for $y_n > \wh{y}_n$.\label{fig:ex1}} 
\end{center} 
\end{figure}

Figure \ref{fig:ex1} shows the plot of $f_{X_n|Y_n = y_n}(x_n)$. The pairs $(X_n,Y_n)$ are i.i.d. across the buyers. Since there are only two buyers with independent valuations, $Z_1 = Y_2$ and $F_{X_1|Y_1 = y_1, Z_1 = z_1}(x_1) = F_{X_1|Y_1 = y_1}(x_1)$. The CDF $F_{X_n|Y_n = y_n}(x_n)$ is given by:
\beq{ex-cdf}
1 - F_{X_n|Y_n = y_n}(x_n) = \left\{ 
\begin{array}{l l}
  1 - x_n + y_n(x_n - 3 x_n^2) & \quad \text{if $0 \leq x_n \leq \frac{1}{3}$,}\\
  1 - x_n & \quad \text{if $\frac{1}{3} < x_n \leq 1$.}\\
\end{array} \right.
\eeq
Since $x_n - 3 x_n^2 > 0$ for $x_n \in [0,1/3)$, $1 - F_{X_n|Y_n = y_n}(x_n)$ is increasing in $y_n$ for $x_n \in [0,1/3)$ and is constant in $y_n$ for $x_n \in [1/3,1]$. Thus, $X_n$ is positively dependent on $Y_n$ in the FOSD sense and FOSD is satisfied (in this example, $X_1$ is independent of $Z_1 = Y_2$). However, for $y_n > \wh{y}_n$, $f_{X_n|Y_n = y_n}(x_n)/f_{X_n|Y_n = \wh{y}_n}(x_n)$ fails to be nondecreasing in $x_{n}$; the ratio is strictly greater than one for $x_n \in (1/6,1/3]$ and is equal to one for $x_n \in [1/3,1]$. Thus, MLR is not satisfied.
\end{example}

Example \ref{eg:ex-fosd} highlights the distinction between MLR and FOSD in the following sense. If the random variable $X_1$ is positively dependent on the random variable $Y_1$ in the MLR sense (i.e., $f_{X_1|Y_1 = y_1}(x_1)/f_{X_1|Y_1 = \wh{y}_1}(x_1)$ is nondecreasing in $x_1$ for any $y_1 \geq \wh{y}_1$), then conditioning on a larger $Y_1$ shifts the probability distribution of $X_1$ towards the larger values of $X_1$ everywhere in the interval of possible values of $X_1$. However, if the random variable $X_1$ is positively dependent on $Y_1$ in the FOSD sense (i.e., $1 - F_{X_1 | Y_1 = y_1}(x_1)$ is nondecreasing in $y_1$ for any $x_1$), then the shift of the probability distribution towards the larger values of $X_1$ when conditioned on a larger value of $Y_1$ can be \textit{localized}; in Example \ref{eg:ex-fosd}, a larger value of $Y_1$ changes the probability distribution of $X_1$ only in the interval $[0,1/3]$, making the values in $[0,1/3]$ close to $1/3$ more likely than the values close to~$0$, while the likelihood of the values of $X_1$ in the interval $(1/3,1]$ remains unchanged. Proposition~\ref{prop:ex-fosd-ranking} below uses this difference between MLR and FOSD to show that Example \ref{eg:ex-fosd} violates the revenue ranking given by Corollary \ref{corollary:rev-rank-sp-ex}. The proof is in Appendix \ref{sec:proof-ex-fosd-ranking}.

\begin{proposition} \label{prop:ex-fosd-ranking}
For Example \ref{eg:ex-fosd}, there exists an interval of choices for the investment $I$ such that for any realization of the signal vector $(Y_1,Y_2)$, the expected revenue to the seller from the second price auction with bids restricted to debt securities is higher than the expected revenue from bids restricted to equity securities.\footnote{This ranking is robust to perturbations of the pdf $f_{X_n|Y_n = y_n}(x_n)$ in the $L^1$ sense so long as the corresponding perturbed CDF satisfies FOSD along with the other assumptions of this paper. A simple family of distributions and investment levels for which the ranking of debt over equity in Proposition \ref{prop:ex-fosd-ranking} holds can be generated as follows. For $f_{X_n|Y_n = y_n}(x_n)$ given by \eqref{eq:ex-pdf}, consider convex combinations of the form $(1-\epsilon)f_{X_n|Y_n = y_n}(x_n)+\epsilon\lambda(x_n)$ for $\epsilon\in [0,1)$ and any pdf $\lambda(x_n)$ on $[0,1]$. Such a pdf
satisfies FOSD and our other assumptions because $\lambda(x_n)$ does not depend upon $y_n$. It is straightforward to modify the proof of Proposition \ref{prop:ex-fosd-ranking} to show the existence of $\overline{\epsilon}$, $\overline{I}>0$ such that the ranking of debt over equity holds in the case of $(1-\epsilon)f_{X_n|Y_n = y_n}(x_n)+\epsilon\lambda(x_n)$ and investment $I$ for any $\left(  \varepsilon,I\right)  \in\lbrack 0,\overline{\varepsilon})\times(0,\overline{I}\rbrack$ and any pdf $\lambda(x)$.}
\end{proposition}

Recall that Corollary \ref{corollary:rev-rank-sp-ex} ranks the revenue from four families of securities in the case of MLR as: cash $\leq$ debt $\leq$ equity $\leq$ call option. Numerical computation for Example \ref{eg:ex-fosd} with investment $I = 0.2$ results in the following values for the seller's expected revenue: from cash bids $= 0.3062$; from call option $= 0.3078$; from equity $= 0.3099$; and from debt $= 0.3123$. Thus, the ranking in Example \ref{eg:ex-fosd} for $I = 0.2$ is: cash $<$ call option $<$ equity $<$ debt.  Notice that (i) cash is last in each ranking, and (ii) compared to Corollary \ref{corollary:rev-rank-sp-ex}, the relative pairwise ranking of debt, equity, and call option are reversed in this example. We show below in Corollary \ref{corollary:rev-rank-sp-refinement-ex} that point (i) holds generally in the case of FOSD, i.e., call option, equity and debt all produce a greater expected revenue for the seller than cash bids in this case. The inferiority of cash bids relative to these other securities thus generalizes from MLR to FOSD. Because the distributions that satisfy MLR form a proper subset of those that satisfy FOSD, the two rankings above show that any ranking of any pair of the three families of securities of debt, equity and call options is possible within the family of distributions that satisfy FOSD. So arriving at a definite ordering among these three families requires restricting the dependence of signals and values beyond FOSD. 

The next proposition gives a revenue ranking of families of securities that holds under FOSD with risk averse buyers. This is achieved by strengthening the steepness condition.

\begin{definition}[Strong steepness] \label{defn:strongly-steeper-security}
A family of securities $\Phi$ is \textit{strongly steeper} than another family of securities $\Psi$ if for any $b^{'}, \wh{b} \in [\underline{b},\overline{b}]$ such that $\phi(w,b^{'}) - \psi(w,\wh{b})$ assumes both negative and positive values over $w \in [\underline{x},\overline{x}]$, $\phi(w,b^{'}) - \psi(w,\wh{b})$ is nondecreasing in $w$.
\end{definition}
\noindent Notice that strong steepness implies steepness. Furthermore, debt, equity, and call option are all strongly steeper than cash.

\begin{proposition} \label{prop:rev-rank-sp-refinement}
The following statements hold:
\begin{enumerate}[(i)]
\item
Let $\Phi$ and $\Psi$ be two families of securities such that $\Phi$ is strongly steeper than~$\Psi$. Assuming FOSD, the second price auction with bids restricted to $\Phi$ generates at least as much expected revenue for the seller as the second price auction with bids restricted to~$\Psi$.
\item
Let $\Phi$ and $\Psi$ be two families of securities satisfying the following assumptions: 
\begin{enumerate}[(a)]
\item
For any $(\mb{X},\mb{Y})$ satisfying FOSD and $I \in \R$ such that Assumptions \ref{assumption:symmetry} - \ref{assumption:ordered-security} hold for both $\Phi$ and $\Psi$, $\Phi$ generates at least as much revenue as $\Psi.$
\item
For any $b \in [\underline{b},\overline{b}]$, there is a finite set $E_b$ (possibly empty) such that for any $w\in [\underline{x},\overline{x}]\backslash E_b,$ $\phi$ and $\psi$ are continuously differentiable, as functions of two variables, in a neighborhood of $(w,b).$ 
\end{enumerate}
Then the family $\Phi$ is strongly steeper than the family $\Psi$.
\end{enumerate}
\end{proposition}

\noindent The proof of Proposition \ref{prop:rev-rank-sp-refinement} is in Appendix \ref{sec:proof-rev-rank-sp-refinement}. By Assumption \ref{assumption:admissible-security}, $\phi(w,b)$ and $\psi(w,b)$ are differentiable in~$w$ almost everywhere; and by Assumption \ref{assumption:ordered-security}, $\Ebig{\phi(W,b)}$ and $\Ebig{\psi(W,b)}$ are differentiable in $b$ with positive derivatives almost everywhere. The regularity condition (b) above imposes only mild additional smoothness requirements on the securities. In particular, this assumption is satisfied by cash, debt, equity, and call option. 

As with Proposition \ref{prop:sp-main-result}, the proof of Proposition \ref{prop:rev-rank-sp-refinement}(i) establishes the stronger result that the seller's expected revenue conditional on the two highest signals is larger for the strongly steeper family of securities. The following is an immediate consequence of Proposition \ref{prop:rev-rank-sp-refinement}(i):
\begin{corollary} \label{corollary:rev-rank-sp-refinement-ex}
Assuming FOSD, the expected revenue from debt, equity, or call option are at least as large as the expected revenue from cash.
\end{corollary}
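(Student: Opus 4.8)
The plan is to recognize the corollary as a direct specialization of Proposition~\ref{prop:rev-rank-sp-refinement} in which $\Psi$ is taken to be the pure cash family and $\Phi$ is taken, in turn, to be each of debt, equity, and call option. The only thing I would need to check is that each of these three families is strongly steeper than pure cash bid in the sense of Definition~\ref{defn:strongly-steeper-security}; once that is in hand, the revenue comparison follows immediately, since the hypothesis of the proposition (PD-FOSD) is precisely the hypothesis of the corollary.

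First I would write out the strong steepness condition for the comparison against cash. For the pure cash family $\Psi$ we have $\psi(w,\widehat{b}) = \widehat{b}$, which is constant in $w$. Hence for any candidate family $\Phi$ and any $b^{'},\widehat{b}$, the difference $\phi(w,b^{'}) - \psi(w,\widehat{b}) = \phi(w,b^{'}) - \widehat{b}$ is nondecreasing in $w$ if and only if $\phi(w,b^{'})$ is nondecreasing in $w$. But by Assumption~\ref{assumption:admissible-security} every admissible security $\phi(\cdot,b^{'})$ is nondecreasing in its first argument, so in fact \emph{every} admissible family is strongly steeper than pure cash. In particular this holds for debt $\phi(w,b^{'}) = \min(w,b^{'})$, equity $\phi(w,b^{'}) = b^{'} w$, and call option $\phi(w,b^{'}) = \pos{w - \overline{x} + b^{'}}$, each of which is manifestly nondecreasing in $w$; I would record the one-line monotonicity check for each to make the verification explicit. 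This also matches the observation already noted right after Definition~\ref{defn:strongly-steeper-security}.

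With the strong steepness of each family over cash established, I would invoke Proposition~\ref{prop:rev-rank-sp-refinement} three times -- once for each of debt, equity, and call option playing the role of $\Phi$ against $\Psi$ equal to pure cash -- to conclude that, under PD-FOSD, the seller's expected revenue from each of these families is at least as large as from pure cash bids. The main (and essentially only) subtlety is the gap between the weak inequality delivered by the proposition (``at least as much'') and the wording of the corollary (``more than''): I would either state the conclusion as a weak inequality consistent with the proposition, or, if strictness is desired, upgrade the ex-post dominance remark following Proposition~\ref{prop:sp-main-result} by arguing that the payment $\phi(X_1,s(Z_1,Z_1;\Phi))$ strictly exceeds the cash payment on a positive-probability event. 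There is no genuine obstacle here: the entire content of the corollary is carried by Proposition~\ref{prop:rev-rank-sp-refinement}, and the corollary is its immediate consequence once the trivial strong steepness of admissible securities over cash is noted.
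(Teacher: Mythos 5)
Your proposal is correct and follows exactly the paper's own route: the paper treats this corollary as an immediate consequence of Proposition~\ref{prop:rev-rank-sp-refinement}, relying on the observation (stated right after Definition~\ref{defn:strongly-steeper-security}) that debt, equity, and call option are all strongly steeper than pure cash bid, which is precisely your reduction. Your added care about the gap between the proposition's weak inequality (``at least as much'') and the corollary's wording (``more than'') is a fair catch of a looseness the paper itself glosses over, but it does not change the argument.
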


It is instructive to compare the revenue ranking of Proposition \ref{prop:rev-rank-sp-refinement}(i) to the ranking in \cite{DeMarzo2005}. Recall Example \ref{eg:ex-fosd}. As noted above, MLR shifts the distribution of $X_1$ across its support as $y_1$ increases while FOSD may only shift this distribution locally. Steepness is fundamentally a local condition that restricts how a security from one family crosses a security from another family (i.e., it crosses at most once and from below). MLR is a global notion of positive dependence that allows this local comparison of two families to determine a ranking based upon the seller's expected revenue. In moving from MLR to FOSD, however, this ranking no longer holds. Steepness is replaced in Proposition \ref{prop:rev-rank-sp-refinement}(i) by strong steepness that compares two families of securities across the entire support of $X_1$. \cite{DeMarzo2005} thus apply a local condition on families of securities together with a global condition on positive dependence in order to rank families of securities in terms of expected revenue. When the global condition on positive dependence MLR is weakened to the condition FOSD that may only bind locally, we must strengthen the comparison of the securities to a global condition that holds across the support of $X_1$ in order to be able to rank the families.

Application of Proposition \ref{prop:rev-rank-sp-refinement}(i) is illustrated further by the revenue ranking in \cite{Abhishek2011} of profit sharing securities, which are inspired by spectrum auctions in India. A fraction $\alpha\in\lbrack0,1)$ defines securities as follows. Setting $I = 0$, in the \textit{profit-loss }security, the winning buyer's payment to the seller consists of a cash bid $b$ along with an $\alpha$ share of the return $x-b$,
\[
\phi^{pl}_{\alpha}\left(  x,b\right)  = b+\alpha\left(x-b\right)  .
\]
In the \textit{profit-only }security, the winning buyer's payment to the seller consists of a cash bid along with an $\alpha$ share of the return $x-b$ when it is positive but with no additional payment when it is not,
\[
\phi^{po}_{\alpha}\left(  x,b\right)  =b+\alpha\max\left\{  0\text{, }x-b\right\}  .
\]
Let $\Phi^{pl}_{\alpha}  $ and $\Phi^{po}_{\alpha}  $ denote respectively the families of profit-loss and profit-only securities that are determined by $\alpha$ and indexed by the range of possible cash bids. It is straightforward to see that: (i) if $\alpha>\alpha^{\prime}$, then $\Phi^{pl}_{\alpha}$ is strongly steeper than $\Phi^{pl}_{\alpha^{\prime}}$ and $\Phi^{po}_{\alpha}$ is strongly steeper than $\Phi^{po}_{\alpha^{\prime}}$; (ii) for fixed $\alpha$, $\Phi^{pl}_{\alpha}$ is strongly steeper than $\Phi^{po}_{\alpha}$. Proposition \ref{prop:rev-rank-sp-refinement}(i) then implies that the seller's expected revenue in the second price auction with either profit-loss or profit-only securities is nondecreasing in the share $\alpha$, and for fixed $\alpha$ the expected revenue is weakly higher with profit and loss sharing as compared to profit only sharing.

We conclude with intuition on why a strongly steeper family of securities generates a higher expected revenue for the seller in the case of risk neutral buyers. Let $\Phi$ and $\Psi$ denote two families of securities such that $\Phi$ is strongly steeper than $\Psi$. Assume that $y_1 > z_1$ so buyer $1$ wins regardless of whether bids are from $\Phi$ or $\Psi$. Buyer $1$ in each case pays the bid of the buyer who observed signal $z_1$. His ex-post payment is equal to $\phi(x_1, s(z_{1},z_{1};\Phi))$ if bids are from~$\Phi$ and the value $X_1$ of the resource is equal to $x_1$, and the corresponding payment in the case of~$\Psi$ is $\psi(x_1,s(z_{1},z_{1};\Psi))$. In our symmetric model with risk neutral buyers, $s(z_{1},z_{1};\Phi)$ and $s(z_{1},z_{1};\Psi)$ are bids that make buyer $1$ indifferent to winning conditioned on $Y_1 = Z_1 = z_1$:
\begin{multline} \label{eq:rn-phi-psi}
\CE{\phi(X_1, s(z_1, z_1; \Phi))}{Y_1 = z_1, Z_1 = z_1} = \CE{X_1 - I}{Y_1 = z_1, Z_1 = z_1} \\
= \CE{\psi(X_1, s(z_1, z_1; \Psi))}{Y_1 = z_1, Z_1 = z_1}.
\end{multline}
The seller would thus expect to receive the same revenue from the families $\Phi$ and $\Psi$ if the highest and the second highest signals are the same, i.e.,  $Y_1=Z_1=z_1$. Buyer $1$ wins the auction, however, when his signal $Y_1 = y_1$ is greater than $z_1$. His expected payment to the seller is therefore calculated conditioned on $Y_1 = y_1 > z_1$. Intuitively, FOSD means that a larger realized signal $Y_1 = y_1$ shifts the distribution of the return $X_1$ from the resource towards its larger values. This shift increases the expected payment to the seller from the strongly steeper family of securities $\Phi$ more than that from $\Psi$ because the ex-post payment to the seller increases more rapidly as a function of $x_1$ in the case of a steeper security. Compared to $Y_1 = z_1$ for which we have the equality \eqref{eq:rn-phi-psi}, for $Y_1 = y_1 > z_1$ we have
\beq{rn-phi-psi-ineq}
\CE{\phi(X_1, s(z_1, z_1; \Phi))}{Y_1 = y_1, Z_1 = z_1} \geq \CE{\psi(X_1, s(z_1, z_1; \Psi))}{Y_1 = y_1, Z_1 = z_1}.
\eeq

We depict this intuition in Figure \ref{fig:ex2} for the case in which $\Phi$ represents equity shares and $\Psi$ represents cash. The lines represent the equilibrium bids in these two families for a given value of $z_1$. On the left is the density $f_{X_1|Y_1=z_1,Z_1=z_1}(x_1)$ of $X_1$ conditioned on $Y_1 =  Z_1 = z_1$. Relative to this density, the expected value of the payments $\phi(X_1, s(z_1, z_1; \Phi))$ and $\psi(X_1, s(z_1, z_1; \Psi))$ are equal. On the right is the density $f_{X_1|Y_1=y_1,Z_1=z_1}(x_1)$ of $X_1$ conditioned on $Y_1=y_1$ and $Z_1 = z_1$ for $y_{1}>z_{1}$. The expected value of $\phi(X_1, s(z_1, z_1; \Phi))$ exceeds the expected value of $\psi(X_1, s(z_1, z_1; \Psi))$ relative to this second density, reflecting both the shift of the density given the observation of the larger signal $Y_1=y_1>z_1$ and the relative strong steepness of the two families of securities.

\begin{figure}[t] 
\begin{center}
\includegraphics[trim=1.4in 5.75in 2in 0.70in, clip=true, height=1.5in]{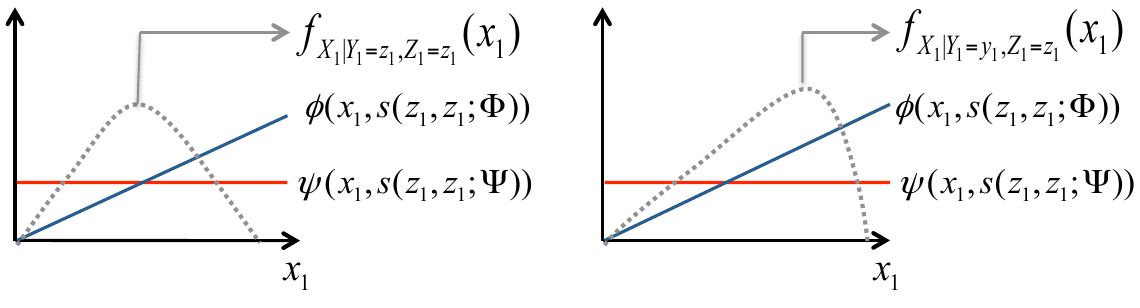}
\caption{\small \sl Payments from security bids as functions of $X_1$, overlaid on the pdf of $X_1$ conditioned on $Y_1 = Z_1 = z_1$ (left), and the pdf of $X_1$ conditioned on $Y_1 = y_1$ and $Z_1 = z_1$ for $y_1 > z_1$ (right). The family $\Phi$ represents equity shares and the family $\Psi$ represents cash payments.\label{fig:ex2}} 
\end{center} 
\end{figure}

\section{The English Auction} \label{sec:eng}
We summarize in this section the extension to the case of the English auction of all results from Sections \ref{sec:risk-aversion} and \ref{sec:pos-dep} using strengthened versions of MLR and FOSD. The English auction is an ascending price auction with a continuously increasing price. At each price level, a buyer decides whether to drop out or not. The price level and the number of active buyers are publicly known at any time. The auction ends when the second to last buyer drops out and the winner pays the price at which this happens. 

If bids are restricted to a family of securities, then the winning buyer pays the security bid at which the second to last buyer drops out. The security bids at which different buyers drop out allow the remaining buyers to infer the signals of those who have dropped out. A buyer's bidding strategy thus takes into account the number of active buyers and the inferred signals of the other buyers who have dropped out. This requires modifying MLR and FOSD such that the random variable $X_1$ is positively dependent not only on $Y_{1}$ and $Z_{1}$ but on the entire vector of signals of the other buyers, i.e., on $(Y_1, Y_2, \ldots, Y_N)$. With this modification, an equilibrium for the English auction with bids restricted to a family of securities can be characterized by following the construction of equilibrium for the English auction with cash bids in \cite{Milgrom&Weber82}.\footnote{See also \cite{Abhishek2011} concerning the English auction with a profit-sharing contract.} Using this characterization of equilibrium, all proofs in this paper extend in a straightforward manner to the case of the English auction. In particular, the analysis in Example \ref{eg:ex-fosd} of Section \ref{sec:pos-dep} holds for the English auction as well because it is strategically equivalent to the second price auction in the case of only two buyers.

\section{Conclusions} \label{sec:conclusions}
\cite{DeMarzo2005} identify the relative steepness of two families of securities as the critical factor in determining which of the two families generates the higher expected revenue for the seller in the second price and the first price auctions. For the second price auction, we first generalize this ranking to include the case of risk averse buyers. We then demonstrate the dependence of this ranking on the underlying positive dependence assumption among values and signals. An example is provided in which positive dependence is relaxed from MLR to FOSD. The pairwise revenue ranking of common families of securities -- debt, equity, and call options -- is reversed in this example from the ranking of \cite{DeMarzo2005}. The cause of this reversal is that positive dependence in the MLR sense globally restricts dependence while positive dependence in the FOSD sense may only restrict it locally; while the local condition of relative steepness is sufficient to rank families in the case of MLR, it must be strengthened in order to obtain a ranking under the less restrictive condition of FOSD. We achieve this by identifying relative strong steepness as a necessary and sufficient condition for comparing two families of securities in the case of FOSD. This result is significant because FOSD is the property that is most commonly cited in auction theory to motivate an assumption of positive dependence among values and signals. These results extend to the case of the English auction.

\appendix
\section{Proof of Proposition \ref{prop:sp-main-result}} \label{sec:proof-sp-main-result}
We start with the following definition:
\begin{definition}[Single crossing] \label{defn:single-crossing}
A function $g_1(w)$ \textit{single crosses} a function $g_2(w)$ from below if there exists $w_c$ such that $g_1(w) \leq g_2(w)$ for $w \leq w_c$ and $g_1(w) \geq g_2(w)$ for $w \geq w_c$.
\end{definition}

Lemma \ref{lemma:ohlin} and Lemma \ref{lemma:likelihood-ineq} below provide the key steps in establishing Proposition \ref{prop:sp-main-result}. 
\begin{lemma} \label{lemma:ohlin}
Let $W$ be a random variable taking values in some interval $J_1$, and let $g_i : J_1 \mapsto J_2$ for $i = 1, 2$ be nondecreasing functions with values in some interval $J_2$. Let $g_1$ single cross $g_2$ from below and $w_c$ be a crossing point. Let $h$ be any concave function. Then the following holds:
\begin{enumerate}
\item
If $\E{g_1(W)} = \E{g_2(W)}$, then $\E{h(g_1(W))} \leq \E{h(g_2(W))}$.
\item
If $\E{h(g_1(W))} = \E{h(g_2(W))}$ and $h^{'}(g_1(w_c)) > 0$, then $\E{g_1(W)} \geq \E{g_2(W)}$.
\end{enumerate}
\end{lemma}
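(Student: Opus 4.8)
The plan is to reduce both parts to a single pointwise inequality coming from a supporting line of $h$ at the common crossing value. First I would record that single crossing forces the two functions to agree at $w_0$: evaluating the defining inequalities of Definition~\ref{defn:single-crossing} at $w = w_0$ gives simultaneously $g_1(w_0) \le g_2(w_0)$ and $g_1(w_0) \ge g_2(w_0)$, so I may write $c := g_1(w_0) = g_2(w_0)$. This identification of the crossing value is what lets concavity be brought to bear at a single point.

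Next I would exploit concavity at $c$. Let $\lambda$ be a supergradient of $h$ at $c$, so that $h(t) \le h(c) + \lambda(t-c)$ for all $t$; for part~2 the hypothesis that $h'(g_1(w_0))$ exists lets me take $\lambda = h'(c)$, which is positive. Define $\psi(t) := h(t) - \lambda t$. Then $\psi$ is concave and the supporting-line inequality reads $\psi(t) \le \psi(c)$, so $\psi$ attains its global maximum at $c$ and is therefore nondecreasing on $\{t \le c\}$ and nonincreasing on $\{t \ge c\}$.

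The key step is the pointwise claim that $\psi(g_1(w)) \le \psi(g_2(w))$ for every $w$. For $w \le w_0$, monotonicity of the $g_i$ gives $g_1(w), g_2(w) \le g_i(w_0) = c$, while single crossing gives $g_1(w) \le g_2(w)$; since $\psi$ is nondecreasing below $c$, the claim follows. For $w \ge w_0$, monotonicity gives $g_1(w), g_2(w) \ge c$ and single crossing gives $g_1(w) \ge g_2(w)$; since $\psi$ is nonincreasing above $c$, the claim again follows. Unwinding the definition of $\psi$, this is precisely
\[
h(g_2(w)) - h(g_1(w)) \;\ge\; \lambda\bigl(g_2(w) - g_1(w)\bigr) \qquad \text{for all } w .
\]

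Finally I would take expectations of this inequality. For part~1, the hypothesis $\E{g_1(W)} = \E{g_2(W)}$ annihilates the right-hand side, yielding $\E{h(g_2(W))} \ge \E{h(g_1(W))}$. For part~2, the hypothesis $\E{h(g_1(W))} = \E{h(g_2(W))}$ annihilates the left-hand side, giving $0 \ge \lambda\bigl(\E{g_2(W)} - \E{g_1(W)}\bigr)$; dividing by $\lambda = h'(g_1(w_0)) > 0$ then gives $\E{g_1(W)} \ge \E{g_2(W)}$. The only genuine subtlety, and hence the main obstacle, is guaranteeing a usable finite supporting slope $\lambda$ at $c$: this is automatic when $c$ lies in the interior of $J_2$ (in particular under the differentiability assumed in part~2), and the boundary case is harmless because monotonicity and the crossing structure then pin the relevant $g_i(w)$ to the value $c$, so the inequality holds trivially there.
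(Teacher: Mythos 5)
Your proof is correct, and it reaches the paper's conclusion by a genuinely different route. The paper follows Ohlin's classical distributional argument: part 1 is cited directly from \cite{Ohlin69}, and for part 2 it passes to the distribution functions $F_i(t)=\Prob{g_i(W)\le t}$, shows these single cross at $t_0=g_1(w_0)$, and uses the representation $h(t)=h(t_0)+\int_{t_0}^{t}h^{'}(r)\,dr$ together with an interchange of integrals to arrive at the master inequality $\E{h(g_1(W))}-\E{h(g_2(W))}\le h^{'}(t_0)\left(\E{g_1(W)}-\E{g_2(W)}\right)$, i.e.\ inequality \eqref{eq:lemma-ohlin-eq4}, from which the claim follows. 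You work pointwise instead: after observing that single crossing forces $g_1(w_0)=g_2(w_0)=:c$, you subtract the supporting line of $h$ at $c$, note that $\psi(t)=h(t)-\lambda t$ is concave with global maximum at $c$ and hence monotone on each side of $c$, and conclude $\psi(g_1(w))\le\psi(g_2(w))$ for every single $w$; taking expectations yields exactly the same master inequality, with the supergradient $\lambda$ playing the role of $h^{'}(t_0)$. Your route is more elementary and self-contained: it needs no CDFs, no Fubini-type manipulation, and no integral representation of $h$, and it proves part 1 directly rather than outsourcing it to Ohlin; it also isolates the one real analytic requirement, namely a finite supporting slope at $c$. The paper's route, in exchange, is phrased in the form in which Ohlin's theorem is standardly quoted (single crossing of the laws of $g_1(W)$ and $g_2(W)$), which is the natural formulation if one wants to reuse the distributional statement elsewhere. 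One caution on your final remark: the boundary case is indeed harmless, but not quite for the reason you state. If $c$ is an endpoint of $J_2$ and the one-sided derivative of $h$ there is infinite, the pointwise inequality is unavailable on the \emph{non}-pinned side of $w_0$, not just trivial on the pinned side; the correct resolution is that pinning makes $g_1$ and $g_2$ globally ordered, so the equal-means hypothesis of part 1 forces $g_1(W)=g_2(W)$ almost surely and the conclusion holds with equality, while part 2 is unaffected because its hypothesis that $h^{'}(g_1(w_0))$ exists and is positive already supplies the finite slope. The paper's own proof silently ignores this same degenerate case, so this is a refinement rather than a gap.
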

\begin{proof}
The first claim is from Lemma $3$ of \cite{Ohlin69}. We therefore turn to the second claim, the proof of which closely follows the proof of the first claim.

Define $F_i(t) \triangleq \Prob{g_i(W) \leq t}$, $i = 1, 2$, and let $t_0 = g_1(w_c)$. Clearly, $F_1$ and $F_2$ are probability distributions. If $t < t_0$, the event $g_2(W) \leq t$ implies the event $g_1(W) \leq t$, hence $F_1(t) \geq F_2(t)$. Similarly, if $t > t_0$, the event $g_1(W) \leq t$ implies the event $g_2(W) \leq t$, hence $F_1(t) \leq F_2(t)$.

Since $h$ is concave, it is differentiable almost everywhere (in particular, the right and the left derivatives exist everywhere). Hence, $h(t) = h(t_0) + \int_{t_0}^t h^{'}(r)dr$, where $h^{'}$ can be taken as the right derivative of $h$. For $i = 1, 2$, regard $g_i(W)$ as a random variable with probability measure $F_i$. The expected value of $h(g_i(W))$ reduces as follows:

\begin{align*}
\E{h(g_i(W))} 
& = \int_{-\infty}^{\infty}h(t)dF_i(t) = h(t_0) + \int_{-\infty}^{\infty} \int_{t_0}^t h^{'}(r) dr dF_i(t),\\
& = h(t_0) - \int_{-\infty}^{t_0} \int^{t_0}_t h^{'}(r) dr dF_i(t) + \int_{t_0}^{\infty} \int_{t_0}^t h^{'}(r) dr dF_i(t), \\
& = h(t_0) - \int_{-\infty}^{t_0} \int_{-\infty}^r dF_i(t) h^{'}(r) dr + \int_{t_0}^{\infty} \int_{r}^{\infty} dF_i(t) h^{'}(r) dr, \\
& = h(t_0) - \int_{-\infty}^{t_0} F_i(r)h^{'}(r)dr + \int_{t_0}^{\infty} (1-F_i(r))h^{'}(r)dr, \\
& = h(t_0) - \int_{-\infty}^{\infty} F_i(r)h^{'}(r)dr + \int_{t_0}^{\infty} h^{'}(r)dr.
\end{align*}
Hence, 
\beq{lemma-ohlin-eq1}
\E{h(g_1(W))}  - \E{h(g_2(W))} = \int_{-\infty}^{\infty}(F_2(t) - F_1(t))h^{'}(t)dt.
\eeq
Substituting $h(t) = t$ in \eqref{eq:lemma-ohlin-eq1} implies
\beq{lemma-ohlin-eq2}
\E{g_1(W)}  - \E{g_2(W)} = \int_{-\infty}^{\infty}(F_2(t) - F_1(t))dt.
\eeq
Since $h^{'}(t)$ is nonincreasing in $t$, $F_2(t) - F_1(t) \leq 0$ for $t \leq t_0$, and $F_2(t) - F_1(t) \geq 0$ for $t \geq t_0$, we have,
\beq{lemma-ohlin-eq3}
(F_2(t) - F_1(t))h^{'}(t) \leq (F_2(t) - F_1(t))h^{'}(t_0), \quad \text{for all $t$}.
\eeq
Combining \eqref{eq:lemma-ohlin-eq1}-\eqref{eq:lemma-ohlin-eq3} results in
\beq{lemma-ohlin-eq4}
\E{h(g_1(W))}  - \E{h(g_2(W))} \leq h^{'}(t_0)\left(\E{g_1(W)}  - \E{g_2(W)}\right).
\eeq
The result then immediately follows from \eqref{eq:lemma-ohlin-eq4}.
\end{proof}

\begin{lemma} \label{lemma:likelihood-ineq}
Let a function $g$ single cross zero from below. Suppose $\E{g(X_1) | Y_1 = \widehat{y}_1, Z_1 = z_1} \geq 0$ for some $\widehat{y}_1$ and $z_1$. Assuming MLR, $\E{g(X_1) | Y_1 = y_1, Z_1 = z_1} \geq 0$ for any $y_1 > \widehat{y}_1$.
\end{lemma}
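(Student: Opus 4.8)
The plan is to prove this directly from the definition of PD-MLR, exploiting the single-crossing structure of $g$; there is no need to route through Lemma~\ref{lemma:ohlin}. Let $x_0$ denote the point at which $g$ crosses zero from below, so that $g(x_1) \leq 0$ for $x_1 \leq x_0$ and $g(x_1) \geq 0$ for $x_1 \geq x_0$. Since the conditioning value $z_1$ is held fixed throughout the statement, I would first apply Definition~\ref{defn:pd-mlr} with $y_1 \geq \widehat{y}_1$ and with $\widehat{z}_1 = z_1$ in both the numerator and the denominator. This shows that the likelihood ratio
\[
\ell(x_1) \triangleq \frac{f_{X_1|Y_1 = y_1, Z_1 = z_1}(x_1)}{f_{X_1|Y_1 = \widehat{y}_1, Z_1 = z_1}(x_1)}
\]
is nondecreasing in $x_1$, and $\ell$ is strictly positive because both conditional densities are positive on $[\underline{x},\overline{x}]$.

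The key step is the pointwise inequality $g(x_1)\bigl(\ell(x_1) - \ell(x_0)\bigr) \geq 0$ for every $x_1$, which I would establish by a two-case argument: for $x_1 \geq x_0$ both factors $g(x_1)$ and $\ell(x_1) - \ell(x_0)$ are nonnegative, while for $x_1 \leq x_0$ both are nonpositive, so in either case the product is nonnegative. Equivalently, $g(x_1)\,\ell(x_1) \geq \ell(x_0)\,g(x_1)$ holds pointwise.

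I would then integrate this inequality against the density $f_{X_1|Y_1 = \widehat{y}_1, Z_1 = z_1}$. Writing $f_{X_1|Y_1 = y_1, Z_1 = z_1}(x_1) = \ell(x_1)\, f_{X_1|Y_1 = \widehat{y}_1, Z_1 = z_1}(x_1)$ converts the target expectation into an integral against the reference density, giving
\[
\E{g(X_1)|Y_1 = y_1, Z_1 = z_1} = \int g(x_1)\,\ell(x_1)\, f_{X_1|Y_1 = \widehat{y}_1, Z_1 = z_1}(x_1)\, dx_1 \geq \ell(x_0)\, \E{g(X_1)|Y_1 = \widehat{y}_1, Z_1 = z_1}.
\]
Since $\ell(x_0) > 0$ and the hypothesis supplies $\E{g(X_1)|Y_1 = \widehat{y}_1, Z_1 = z_1} \geq 0$, the right-hand side is nonnegative, which is exactly the claim.

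I do not anticipate a serious obstacle here; the argument is essentially self-contained. The only point requiring care is the reduction of PD-MLR to monotonicity of $\ell$ in the single variable $x_1$: this relies on holding $z_1$ fixed in both the numerator and denominator so that Definition~\ref{defn:pd-mlr} applies with $\widehat{z}_1 = z_1$, and on the positivity of the conditional densities to guarantee $\ell(x_0) > 0$. Both are immediate from the standing assumptions, so the main content is simply the single-crossing/monotone-ratio pairing in the pointwise inequality.
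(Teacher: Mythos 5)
Your proposal is correct and is essentially the paper's own proof: the pointwise inequality $g(x_1)\bigl(\ell(x_1)-\ell(x_0)\bigr)\geq 0$ anchored at the crossing point is exactly the paper's inequality \eqref{eq:likelihood-ineq-eq1}, and the change of measure via $\ell$ followed by pulling out the constant $\ell(x_0)>0$ matches the paper's chain of equalities and inequalities step for step. No gaps; the only cosmetic difference is that you make the positivity of $\ell(x_0)$ and the fixed-$z_1$ application of Definition~\ref{defn:pd-mlr} explicit, which the paper leaves implicit.
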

\begin{proof}
Since conditioning on $Z_1 = z_1$ plays no role in the above claim, for notational convenience define $h_{X_1|Y_1 = y_1}(x_1) \triangleq f_{X_1|Y_1 = y_1, Z_1 = z_1}(x_1)$, which omits $Z_1 = z_1$. Let $w_0$ be the point at which~$g$ crosses zero from below. Since $\f{x_1} / \fh{x_1}$ is increasing in $x_1$, $g(x_1) \leq 0$ for $x_1 \leq w_0$, and $g(x_1) \geq 0$ for $x_1 \geq w_0$, we get
\beq{likelihood-ineq-eq1}
g(x_1) \frac{\f{x_1}}{\fh{x_1}} \geq g(x_1) \frac{\f{w_0}}{\fh{w_0}} \quad \text{for all $x_1$.}
\eeq
Then, 
\begin{align*}
\E{g(X_1) | Y_1 = y_1, Z_1 = z_1} 
& = \int_{\underline{x}}^{\overline{x}} g(w)\f{w}dw, \\ 
& = \int_{\underline{x}}^{\overline{x}} g(w)\frac{\f{w}}{\fh{w}}\fh{w}dw, \\
& \geq \int_{\underline{x}}^{\overline{x}} g(w) \frac{\f{w_0}}{\fh{w_0}} \fh{w}dw, \\
& = \frac{\f{w_0}}{\fh{w_0}} \E{g(X_1) | Y_1 = \widehat{y}_1, Z_1 = z_1} \geq 0,
\end{align*}
where the first inequality is from \eqref{eq:likelihood-ineq-eq1}. This completes the proof.
\end{proof}

We can now prove Proposition \ref{prop:sp-main-result}. The seller's revenue if bids are restricted to the family $\Phi$ is $\E{\phi(X_1,s(Z_1, Z_1;\Phi)) | Y_1 > Z_1}$ and is $\E{\psi(X_1,s(Z_1, Z_1;\Psi)) | Y_1 > Z_1}$ if bids are restricted to the family $\Psi$. To prove Proposition~\ref{prop:sp-main-result}, it suffices to show that for any $y_1 > z_1$,
\beq{proof-sp-main-result-eq1}
\E{\phi(X_1,s(Z_1, Z_1;\Phi)) | Y_1 = y_1, Z_1 = z_1} \geq \E{\psi(X_1,s(Z_1, Z_1;\Psi)) | Y_1 = y_1, Z_1 = z_1}.
\eeq

From \eqref{eq:s-sp}, for any $z_1$,  
\begin{multline} \label{eq:proof-sp-main-result-eq2}
\E{u( X_1 - I - \phi(X_1,s(Z_1, Z_1;\Phi))) | Y_1 = z_1, Z_1 = z_1} = 0\\
= \E{u( X_1 - I - \psi(X_1,s(Z_1, Z_1;\Psi))) | Y_1 = z_1, Z_1  = z_1}. 
\end{multline}
If either $\phi(x_1,s(z_1,z_1;\Phi)) < \psi(x_1,s(z_1,z_1;\Psi))$ or $\phi(x_1,s(z_1,z_1;\Phi)) > \psi(x_1,s(z_1,z_1;\Psi))$ for all $x_1$ then \eqref{eq:proof-sp-main-result-eq2} would not be true. Hence, $\phi(x_1,s(z_1,z_1;\Phi))$ and $\psi(x_1,s(z_1,z_1;\Psi))$ must cross each other as functions of $x_1$. Since $\Phi$ is steeper than $\Psi$, $\phi(x_1,s(z_1,z_1;\Phi))- \psi(x_1,s(z_1,z_1;\Psi))$ single crosses zero from below. This implies that $x_1 - \psi(x_1,s(z_1 , z_1 ;\Psi))$ single crosses $x_1 - \phi(x_1,s(z_1, z_1 ;\Phi))$ from below. This, along with \eqref{eq:proof-sp-main-result-eq2}, and $u$ being concave and increasing, allow for an application of the second part of Lemma \ref{lemma:ohlin}, and results in the following inequality:
\begin{multline} \label{eq:proof-sp-main-result-eq3a}
\E{X_1 - I - \phi(X_1,s(Z_1, Z_1;\Phi)) | Y_1 = z_1, Z_1 = z_1} \\
\leq \E{X_1 - I - \psi(X_1,s(Z_1, Z_1;\Psi)) | Y_1 = z_1, Z_1 = z_1}.
\end{multline}
Hence,
\beq{proof-sp-main-result-eq3}
\E{\phi(X_1,s(Z_1, Z_1;\Phi)) - \psi(X_1,s(Z_1, Z_1;\Psi)) | Y_1 = z_1, Z_1 = z_1} \geq 0.
\eeq
Since $\phi(x_1,s(z_1,z_1;\Phi))- \psi(x_1,s(z_1,z_1;\Psi))$ single crosses zero from below, \eqref{eq:proof-sp-main-result-eq3} and Lemma \ref{lemma:likelihood-ineq} imply
\beq{proof-sp-main-result-eq4}
\E{\phi(X_1,s(Z_1, Z_1;\Phi)) - \psi(X_1,s(Z_1, Z_1;\Psi)) | Y_1 = y_1, Z_1 = z_1} \geq 0,
\eeq
for $y_1 > z_1$. This establishes \eqref{eq:proof-sp-main-result-eq1} and the proof is complete. \eop

\section{Proof of Proposition \ref{prop:ex-fosd-ranking}} \label{sec:proof-ex-fosd-ranking}
We first explain how the example was devised. Recall the statement of Lemma \ref{lemma:likelihood-ineq} from the preceding section. The proof of Proposition \ref{prop:sp-main-result} is an application of Lemma \ref{lemma:likelihood-ineq} in which for any value of $z_{1}$: (i) $g(x_1) = \phi(x_{1},s(z_{1},z_{1};\Phi))-\psi(x_{1},s(z_{1},z_{1};\Psi))$, where $\Phi$ is a steeper family than $\Psi$; (ii) $\wh{y}_{1}$ equals $z_{1}$; (iii) Lemma \ref{lemma:likelihood-ineq} is applied to derive inequality \eqref{eq:proof-sp-main-result-eq4}, which is the conclusion that the expected payment by buyer $1$ in the event that he trades is greater with the family of securities $\Phi$ than with~$\Psi$. Example \ref{eg:ex-fosd} is constructed with the goal of making this last step false so that the steeper family of securities produces a lower expected payment by buyer $1$. The key observation is that while the function $g(x_{1})$ is assumed by Lemma \ref{lemma:likelihood-ineq} to single cross zero from below, this does not preclude $g(x_{1})$ from decreasing for values of $x_{1}$ below the point at which it crosses zero. In Example \ref{eg:ex-fosd}, a larger value of $Y_1$ changes the probability density of $X_1$ only in the interval $[0,1/3]$, making the values in $[0,1/3]$ closer to $1/3$ more likely and the values near $0$ less likely, while the probability density over $[1/3,1]$ remains unaffected. If $g(x_{1})$ is decreasing over $[0,1/3]$, then conditioning on a larger value of $Y_{1}$ can decrease the expected value of $g(X_{1})$ over $[0,1]$ and thereby reverse the conclusion of Proposition \ref{prop:sp-main-result}. As we show below, this
in fact occurs for a range of values of the investment $I$ and for each realization of the signal vector $(Y_1,Y_2)$ in the case in which $\Phi$ is the equity family and~$\Psi$ is the debt family. 

We begin by choosing the investment parameter $I$ to ensure that the relevant $g(x_{1})$ in the case of debt and equity crosses zero at a value larger than $1/3$. In the case of debt securities, the optimal bid $b$ of buyer $1$ when his signal equals zero is determined by the equation
\begin{align} \label{eq:proof-ex-fosd-ranking-eq1a}
& \E{X_{1}-I-\min(X_{1},b)\left\vert Y_{1}=0\right.} = 0, \nonumber \\
\Leftrightarrow & \E{X_{1}-I\left\vert Y_{1}=0\right.} = \E{\min(X_{1},b)\left\vert Y_{1}=0\right.}.
\end{align}
With foresight to the use of $g(x_{1})$ below, we wish to ensure that the optimal bid of buyer $1$ when his signal equals zero exceeds $1/3$. The left side of \eqref{eq:proof-ex-fosd-ranking-eq1a} is decreasing in $I$ and the right side is nondecreasing in $b$. At $I=0.2$ and $b=1/3$, the left side strictly exceeds the right side.  As a consequence, we conclude that there is a value $\overline{I}>0.2$ such that for all $I<\overline{I}$, the
value of $b$ that solves \eqref{eq:proof-ex-fosd-ranking-eq1a} strictly exceeds $1/3$. We therefore
fix the investment at some value $\wt{I}\in(0,\overline{I})$. 

Consider an arbitrary realization $(\wt{y}_{1},\wt{y}_{2})$ of the signal vector such that buyer $1$ wins, i.e., $\wt{y}_{1}>\wt{y}_{2}$. Given $\wt{y}_{2}$, let $b^{d}$ denote the bid of buyer $2$ when he bids with debt securities and $b^{e}$ his bid when he bids with equity securities. It is sufficient to prove that
\beq{proof-ex-fosd-ranking-eq1}
\E{ b^e X_1 | Y_1 = \wt{y}_1} < \E{\min(X_1, b^d) | Y_1 = \wt{y}_1},
\eeq
where the left hand side denotes the seller's expected revenue given $(\wt{y}_{1},\wt{y}_{2})$ in the case of equity securities and the right hand side denotes his expected revenue in the case of debt securities. We are using here the fact that $X_{1}$ is independent of $Y_{2}$ in this example. 

Lemma \ref{lemma:eq-s-sp} states that the bids $b^{d}$ and $b^{e}$ satisfy:
\begin{align}
& \E{X_1 - \wt{I} - b^e X_1 | Y_1 = \wt{y}_2} = 0 = \E{X_1 - \wt{I} - \min(X_1, b^d) | Y_1 = \wt{y}_2}, \label{eq:proof-ex-fosd-ranking-eq2} \\
\Rightarrow & \E{b^e X_1 - \min(X_1, b^d) | Y_1 = \wt{y}_2} = 0. \label{eq:proof-ex-fosd-ranking-eq3}
\end{align}
We next apply \eqref{eq:proof-ex-fosd-ranking-eq2} to bound the bids $b^{d}$ and $b^{e}$. Since $\wt{I}>0$, it is straightforward to see that $b^{e}< 1$. Our foresight in choosing $\wt{I}$ is now useful: because $\E{X_{1}-\wt{I}-\min(X_{1},b^{d})\left\vert Y_{1}=y_{1}\right.}$ is increasing in $y_{1}$, the solution $b^{d}$ to \eqref{eq:proof-ex-fosd-ranking-eq2} is at least as large as its value at $y_{1}=0$ and so $b^{d}>1/3$.

Define $g(x_1) \triangleq b^e x_1 - \min(x_1,b^d)$. The function $g$ is decreasing in the interval $[0,b^d]$ and thus decreasing in $[0,1/3]$. From \eqref{eq:proof-ex-fosd-ranking-eq3}, $\mathbb{E}[g(X_1) | Y_1 = \tilde{y}_2] = 0$. Hence, 
\begin{multline} \label{eq:proof-ex-fosd-ranking-eq4}
\E{g(X_1) | Y_1 = \tilde{y}_1} = \E{g(X_1) | Y_1 = \tilde{y}_1} - \E{g(X_1) | Y_1 = \tilde{y}_2} \\
= \int_{0}^{1} g(w)f_{X_1|Y_1 = \tilde{y}_1}(w)dw - \int_{0}^{1} g(w)f_{X_1|Y_1 = \tilde{y}_2}(w)dw.
\end{multline}
From \eqref{eq:ex-pdf}, for $w \in (1/3,1]$, $f_{X_1|Y_1 = \tilde{y}_1}(w) = f_{X_1|Y_1 = \tilde{y}_2}(w) = 1$. For $w \in [0,1/3]$, $f_{X_1|Y_1 = y_1}(w) = 1 - y_1 + 6wy_1$ and $g(w) = (b^e-1)w$. Equation \eqref{eq:proof-ex-fosd-ranking-eq4} therefore simplifies to:
\begin{align}
\E{g(X_1) | Y_1 = \tilde{y}_1} 
& = \int_{0}^{1/3} g(w)\left(f_{X_1|Y_1 = \tilde{y}_1}(w)-f_{X_1|Y_1 = \tilde{y}_2}(w)\right)dw, \nonumber \\
& = \int_{0}^{1/3} (b^e-1)w\left((1 - \tilde{y}_1 + 6w\tilde{y}_1) - (1 - \tilde{y}_2 + 6w\tilde{y}_2)\right)dw, \nonumber \\
& = \int_{0}^{1/3} (b^e-1)(\tilde{y}_1-\tilde{y}_2)w(6w-1)dw, \nonumber \\
& = (b^e-1)(\tilde{y}_1-\tilde{y}_2)\frac{1}{54} < 0, \label{eq:proof-ex-fosd-ranking-eq5}
\end{align}
where the last inequality is because $\tilde{y}_1 > \tilde{y}_2$ and $b^e < 1$. This establishes \eqref{eq:proof-ex-fosd-ranking-eq1} and the proof is complete. \eop

\section{Proof of Proposition \ref{prop:rev-rank-sp-refinement}} \label{sec:proof-rev-rank-sp-refinement}

\textbf{Proof of part (i)}:

The proof is almost the same as the proof of Proposition \ref{prop:sp-main-result}. The main difference is in how the concluding inequality that ranks the expected payments of the winning buyer under different families of securities is derived using FOSD and strong steepness instead of MLR and steepness. 

As in the proof of Proposition \ref{prop:sp-main-result}, it suffices to show that \eqref{eq:proof-sp-main-result-eq1} holds for any $y_1 > z_1$. The argument in the proof of Proposition \ref{prop:sp-main-result} implies that $\phi(x_1,s(z_1,z_1;\Phi))$ must cross $\psi(x_1,s(z_1,z_1;\Psi))$ as functions of $x_1$. Strong steepness requires that $\phi(x_1,s(z_1 , z_1 ;\Phi)) - \psi(x_1,s(z_1 , z_1 ;\Psi))$ is nondecreasing in $x_1$ and hence $x_1 - \psi(x_1,s(z_1 , z_1 ;\Psi))$ single crosses $x_1 - \phi(x_1,s(z_1, z_1 ;\Phi))$ from below. Inequality \eqref{eq:proof-sp-main-result-eq3} then follows by the same argument as before, implying:
\begin{align*}
& \E{\phi(X_1,s(Z_1, Z_1;\Phi)) - \psi(X_1,s(Z_1, Z_1;\Psi)) | Y_1 = z_1, Z_1 = z_1} \geq 0, \\
\Rightarrow &  \E{\phi(X_1,s(Z_1, Z_1;\Phi)) - \psi(X_1,s(Z_1, Z_1;\Psi)) | Y_1 = y_1, Z_1 = z_1} \geq 0.
\end{align*}
The last inequality that proves the result is from an application of Lemma \ref{lemma:pd-fosd-eq}, using the fact that $\phi(x_1,s(z_1 , z_1 ;\Phi)) - \psi(x_1,s(z_1 , z_1 ;\Psi))$ is nondecreasing in $x_1$ (i.e., strong steepness) together with FOSD. \eop

\vspace*{1\baselineskip} \noindent \textbf{Proof of part (ii)}:

It suffices to consider only the case of two risk neutral buyers and $(\mb{X},\mb{Y})$ satisfying FOSD such that the $(X_i, Y_i)$ pairs for different buyers are independent and identically distributed. We also assume without loss of generality that the $Y_i$'s are distributed over the interval $[0,1].$

Suppose $\Phi$ and $\Psi$ are two families of securities satisfying conditions (a) and (b) of Proposition \ref{prop:rev-rank-sp-refinement}(ii). Let $\wh{b} \in (\underline{b}, \overline{b})$ and $\wt{b} \in (\underline{b}, \overline{b})$ be such that $\phi(w,\wh{b}) - \psi(w,\wt{b})$ assumes both negative and positive values over $ w \in [\underline{x},\overline{x}]$. Let $w_o$ be a point in $[\underline{x},\overline{x}]\backslash E_b,$ and let $\wh{\alpha} \triangleq \pd{\phi(w,\wh{b})}{w}\big\vert_{w=w_o}\big.$  and $\wt{\alpha} \triangleq \pd{\psi(w,\wt{b})}{w}\big\vert_{w=w_o}\big.$. It suffices to prove that $\wh{\alpha} \geq  \wt{\alpha}.$

For $\epsilon \geq 0$, let the function $\rho^{\epsilon}$ be defined by:
\beq{fn-rho}
\rho^{\epsilon}(t) = \left\{
\begin{array}{l l}
-\epsilon^{\frac{1}{3}} & \quad \text{if $-\epsilon^{\frac{1}{3}} \leq t < 0$,}\\
\epsilon^{\frac{1}{3}} & \quad \text{if $0 \leq t \leq \epsilon^{\frac{1}{3}}$,}\\
0 & \quad \text{otherwise.}
\end{array} \right.
\eeq
Note that $\int_{\infty}^{\infty}  t  \rho^{\epsilon}(t) dt = \epsilon.$

Since $\phi(w,\wh{b}) - \psi(w,\wt{b})$ assumes both negative and positive values, there exists a pdf $f_W$ over $[\underline{x}, \overline{x}]$ that is continuously differentiable, strictly positive, and such that if random variable $W$ has this pdf, then $\Ebig{\phi(W,\wh{b})} = \Ebig{\psi(W,\wt{b})}$. Define $r \triangleq \Ebig{\phi(W,\wh{b})} = \Ebig{\psi(W,\wt{b})}$ and let $I = \Ebig{W} - r$. 

We describe joint distributions for $(X_1^{\epsilon}, Y_1)$, parameterized by $\epsilon \geq 0$, using the pdf $f_W$ and the function $\rho^{\epsilon}$. The random variable $Y_1$ is uniformly distributed over the interval $[0,1],$  and
\beq{fx}
f^{\epsilon}_{X_1| Y_1 = y_1}(x_1)  =  f_W(x_1)+ y_1 \rho^{\epsilon}(x_1-w_o).
\eeq
In words, the conditional pdf of $X^{\epsilon}_1$ given $Y_1=y_1$ is obtained from $f_W$ by shifting a small amount of probability mass, proportional to $y_1,$  from just below $w_o$ to just above $w_o.$ We shall only consider $\epsilon$ small enough that the conditional pdf is nonnegative and the rectangular set, (support of $\rho^{\epsilon}(w-w_o)$)$\times$(an open interval containing $\widehat{b}$), is contained in a set of continuous differentiability of $\phi$, and the analogous condition holds for $\widetilde{b}$ and $\psi.$ For each such $\epsilon$, $(\mb{X}^{\epsilon},\mb{Y})$ satisfies FOSD. If $\epsilon=0$, the signals are independent of the values and the pdf of $X_1^{\epsilon}$ is identical to the pdf of $W$. By construction, $\Ebig{W-I-\phi(W,\wh{b})}=0$. Assumption \ref{assumption:ordered-security}(ii) then implies that for $\epsilon = 0$ and any $y$, $\Ebig{X_1^{\epsilon} - I - \phi(X_1^{\epsilon}, \underline{b})|Y_1=y} > 0 $ and $\Ebig{X_1^{\epsilon} - I - \phi(X_1^{\epsilon}, \overline{b})|Y_1=y} < 0 $. By the smoothness conditions in Proposition \ref{prop:rev-rank-sp-refinement}(ii), $\Ebig{X_1^{\epsilon} - I - \phi(X_1^{\epsilon}, b)|Y_1=y}$ is continuous in $\epsilon$. Hence, for small values of $\epsilon$, our choice of $I$ and $(\mb{X}^{\epsilon},\mb{Y})$ satisfy Assumption \ref{assumption:ordered-security}(i) for family $\Phi$. The same holds true for $\Psi$.

Let $R(\epsilon; \Phi)$ denote the expected revenue from the family of securities $\Phi$ for $\epsilon$ parameterized random variables $(\mb{X}^{\epsilon},\mb{Y})$; define $R(\epsilon; \Psi)$ similarly. By the assumed revenue ranking, $R(\epsilon; \Phi)\geq R(\epsilon; \Psi)$ for all $\epsilon$ being considered. If $\epsilon=0$, both buyers bidding~$\wh{b}$ is the symmetric equilibrium for the family of securities $\Phi$, both buyers bidding~$\wt{b}$ is the symmetric equilibrium for the family of securities $\Psi$, and the revenue for each set of securities is~$r$. Hence, $R(0; \Phi) = R(0; \Psi).$ It will be shown below that the derivative of $R(\epsilon; \Phi)$ with respect to $\epsilon$ at zero satisfies $R'(0;\Phi) = (1+\wh{\alpha})/3$. Similarly, $R'(0; \Psi) = (1+\wt{\alpha})/3$. By the revenue ranking, we must have $R'(0; \Phi) \geq R'(0; \Psi)$, implying $\wh{\alpha} \geq \wt{\alpha}$. It remains to show that $R'(0;\Phi) = (1+\wh{\alpha})/3$.

Consider the family $\Phi.$ The hypotheses imply that $\Ebig{\phi(X_1^{\epsilon}, b)|Y_1=y}$ is continuously differentiable in both $\epsilon$ and $b$ for $\epsilon$ in a neighborhood of zero and $b$ in a neighborhood of $\widehat{b}.$ Moreover,
\begin{align}
& \CE{X^{\epsilon}_1}{Y_1 = y_1} = \E{W} + \epsilon y_1, \label{eq:x-eps}\\
& \CE{\phi(X^{\epsilon}_1, b)}{Y_1 = y_1} = r + \epsilon y_ 1 \pd{\phi(w,b)}{w}\bigg\vert_{w=w_o}\bigg. + o(\epsilon), \label{eq:phi-diff}
\end{align}
where \eqref{eq:phi-diff} is obtained by the Taylor series representation of $\phi(w,b)$ centered at $w = w_o$, for a fixed $b$ in a small neighborhood of $\wh{b}$. Here, $\lim_{\epsilon \to 0^+} o(\epsilon)/\epsilon = 0$. Equivalently, 
\beq{d-phi}
\pd{\CE{\phi(X^{\epsilon}_1, b)}{Y_1 = y_1}}{\epsilon}\Bigg\vert_{\epsilon=0}\Bigg. = y_ 1 \pd{\phi(w,b)}{w}\bigg\vert_{w=w_o}\bigg.. 
\eeq
Since the signal and value pairs are independent, the function $s^{\epsilon}(y_1, z_1)$, defined by \eqref{eq:s-sp} for $(\mb{X}^{\epsilon},\mb{Y})$, depends only on $y_1$; we write it is as $s^{\epsilon}(y_1).$  It is characterized by the equation
\beqn
\CE{X_1^{\epsilon} - I - \phi(X_1^{\epsilon}, s^{\epsilon} (y_1) )}{Y_1=y_1} = 0.
\eeqn
For a given $y_1$, \eqref{eq:x-eps}, \eqref{eq:d-phi}, and the smoothness conditions of Proposition \ref{prop:rev-rank-sp-refinement}(ii)(b) imply that the partial derivatives of $\Ebig{X_1^{\epsilon} - I - \phi(X_1^{\epsilon}, b)| Y_1=y_1}$ with respect of $\epsilon$ and $b$ are continuous for~$\epsilon$ in some small interval $[0, \overline{\epsilon})$ and $b$ in a small neighborhood of $\wh{b}$; and the partial derivative with respect to $b$ is nonzero. By the implicit function theorem, $s^{\epsilon}(y_1)$ is differentiable in $\epsilon$ and satisfies:
\beq{diff-s}
\pd{s^{\epsilon}(y_1)}{\epsilon} = 
-\frac{\displaystyle \pd{\Ebig{X_1^{\epsilon} - I - \phi(X_1^{\epsilon}, b)| Y_1=y_1}}{\epsilon}\bigg\vert_{b=s^{\epsilon} (y_1)}\bigg.}{\displaystyle \pd{\Ebig{X_1^{\epsilon} - I - \phi(X_1^{\epsilon}, b)| Y_1=y_1}}{b}\bigg\vert_{b=s^{\epsilon} (y_1)}\bigg.} 
= \frac{y_1 \left(1-\displaystyle \pd{\phi(w,s^{\epsilon} (y_1))}{w}\bigg\vert_{w=w_o}\bigg.\right)}{\displaystyle \pd{\Ebig{\phi(X_1^{\epsilon}, b)| Y_1=y_1}}{b}\bigg\vert_{b=s^{\epsilon} (y_1)}\bigg.}.
\eeq
Notice that at $\epsilon = 0$, $X_1^{\epsilon}$ is independent of $Y_1$, $s^{\epsilon} (y_1) = \wh{b}$ for any $y_1$, and $X_1^{\epsilon}$ and $W$ are identical in distribution.
For notational convenience, define $D \triangleq \pd{\E{\phi(W, b)}}{b} \big\vert_{b = \wh{b}}\big.$. Then from \eqref{eq:d-phi}, \eqref{eq:diff-s} and continuity of derivatives,
\beq{diff-s0}
\pd{s^{\epsilon}(y_1)}{\epsilon}\bigg\vert_{\epsilon = 0}\bigg. = \frac{y_1(1-\wh{\alpha})}{D},
\eeq
and
\begin{align}
\pd{\CE{\phi(X_1^{\epsilon}, s^{\epsilon}(y_2))}{Y_1=y_1}}{\epsilon}\bigg\vert_{\epsilon = 0}\bigg. 
& = \pd{\Ebig{\phi(X_1^{\epsilon}, \wh{b}) \vert Y_1=y_1}}{\epsilon}\bigg\vert_{\epsilon = 0}\bigg. + \left(\pd{\E{\phi(W, b)}}{b} \pd{s^{\epsilon}(y_2)}{\epsilon}\right)\bigg\vert_{\epsilon = 0, b = \wh{b}}\bigg.,  \nonumber \\
& = \wh{\alpha} y_1 + (1-\wh{\alpha})y_2. \label{eq:phi-eps}
\end{align}

Next, notice that 
\beq{rev-eq}
R(\epsilon; \Phi) = \CE{\phi(X_1^{\epsilon} , s^{\epsilon}(Y_2) )}{Y_1 > Y_2 } = \CE{\CE{\phi(X_1^{\epsilon} , s^{\epsilon}(Y_2) )}{Y_1, Y_2}}{Y_1 > Y_2 }.
\eeq
Since $\Ebig{\phi(X_1^{\epsilon} , b)|Y_1, Y_2}$ is continuously differentiable in $\epsilon$ and $b$, and $s^{\epsilon}(Y_2)$ is differentiable in $\epsilon$, $\Ebig{\phi(X_1^{\epsilon} , s^{\epsilon}(Y_2) )|Y_1, Y_2}$ is continuously differentiable in $\epsilon$. Additionally, because $Y_i$'s take values in finite interval $[0,1]$, in order to compute the derivative of $R(\epsilon; \Phi)$, we can take the derivative inside the outer expectation in \eqref{eq:rev-eq}.
\begin{align*}
\pd{R(\epsilon; \Phi)}{\epsilon}\bigg\vert_{\epsilon=0}\bigg. 
& = \E{\pd{\CE{\phi(X_1^{\epsilon} , s^{\epsilon}(Y_2) )}{Y_1, Y_2}}{\epsilon}\Bigg\vert_{\epsilon=0}\bigg. \Bigg\vert Y_1 > Y_2 \bigg.}, \\
& = \CE{\wh{\alpha} Y_1 + (1-\wh{\alpha})Y_2}{Y_1 > Y_2}, \\
& = \frac{1+\wh{\alpha}}{3},
\end{align*}
where the second equality follows from \eqref{eq:phi-eps}. Therefore, $R'(0;\Phi) = (1+\wh{\alpha})/3,$ as required. \eop

\bibliographystyle{abbrvnatold}
\bibliography{../../../../Citations/AuctionTheory,../../../../Citations/Misc}

\begin{thebibliography}{10}
\expandafter\ifx\csname natexlab\endcsname\relax\def\natexlab#1{#1}\fi

\bibitem[Abhishek et~al.(2013)Abhishek, Hajek, and Williams]{Abhishek2011}
V.~Abhishek, B.~Hajek, and S.~R. Williams.
\newblock Auctions with a profit sharing contract.
\newblock {\em Games and Economic Behavior}, 77\penalty0 (1):\penalty0
  247--270, 2013.

\bibitem[Che and Kim(2010)]{Che&Kim2010}
Y.-K. Che and J.~Kim.
\newblock Bidding with securities: comment.
\newblock {\em The American Economic Review}, 100\penalty0 (4):\penalty0
  1929--1935, 2010.

\bibitem[de~Castro(2010)]{deCastro2010}
L.~I. de~Castro.
\newblock Affiliation, equilibrium existence and revenue ranking of auctions.
\newblock Working paper. Kellogg School of Management, Northwestern University,
  December 2010.

\bibitem[DeMarzo et~al.(2005)DeMarzo, Kremer, and Skrzypacz]{DeMarzo2005}
P.~M. DeMarzo, I.~Kremer, and A.~Skrzypacz.
\newblock Bidding with securities: Auctions and security design.
\newblock {\em American Economic Review}, 95\penalty0 (4):\penalty0 936--959,
  2005.

\bibitem[Jun and Wolfstetter(2012)]{Jun&Wolfstetter2012}
B.~H. Jun and E.~G. Wolfstetter.
\newblock Security bid auctions for agency contracts.
\newblock Working paper, January 2012.

\bibitem[Kogan and Morgan(2010)]{Kogan&Morgan2010}
S.~Kogan and J.~Morgan.
\newblock Securities auctions under moral hazard: An experimental study.
\newblock {\em Review of Finance}, 14\penalty0 (3):\penalty0 477--520, 2010.

\bibitem[Krishna(2002)]{Krishna2002}
V.~Krishna.
\newblock {\em Auction Theory}.
\newblock Academic Press, March 2002.

\bibitem[Milgrom and Weber(1982)]{Milgrom&Weber82}
P.~R. Milgrom and R.~J. Weber.
\newblock A theory of auctions and competitive bidding.
\newblock {\em Econometrica}, 50\penalty0 (5):\penalty0 1089--1122, 1982.

\bibitem[Ohlin(1969)]{Ohlin69}
J.~Ohlin.
\newblock On a class of measures of dispersion with application to optimal
  reinsurance.
\newblock {\em Astin Bulletin}, 5:\penalty0 249–--266, 1969.

\bibitem[Shaked and Shanthikumar(2006)]{Shaked&Shanthikumar06}
M.~Shaked and G.~J. Shanthikumar.
\newblock {\em Stochastic Orders (Springer Series in Statistics)}.
\newblock Springer-Verlag, New York, October 2006.

\end{thebibliography}

\end{document}